\providecommand{\tabularnewline}{\\}
\newcommand{\lyxaddress}[1]{
	\par {\raggedright #1
	\vspace{1.4em}
	\noindent\par}
}
\theoremstyle{plain}
\newtheorem{assumption}{\protect\assumptionname}
\theoremstyle{plain}
\newtheorem{prop}{\protect\propositionname}
\theoremstyle{plain}
\newtheorem*{thm*}{\protect\theoremname}
\theoremstyle{plain}
\newtheorem*{cor*}{\protect\corollaryname}
\newcolumntype{C}[1]{>{\centering\arraybackslash}p{#1}}
\providecommand{\assumptionname}{Assumption}
\providecommand{\corollaryname}{Corollary}
\providecommand{\propositionname}{Proposition}
\providecommand{\theoremname}{Theorem}
\begin{document}
\title{Branching Fixed Effects: A Proposal for Communicating Uncertainty}
\author{Patrick Kline\thanks{This paper was prepared for the Econometric Society World Congress
2025. I thank Isaiah Andrews, Kevin Chen, Magne Mogstad, Rob Shimer,
Raffaele Saggio, Andres Santos, Chris Walters, Martin Weidner, and
Jinglin Yang for helpful comments and questions. This paper makes
use of the Veneto Work Histories dataset developed by the Economics
Department in Università Ca’ Foscari Venezia under the supervision
of Giuseppe Tattara.}\\
UC Berkeley}
\maketitle
\begin{abstract}
Economists often rely on estimates of linear fixed effects models
produced by other teams of researchers. Assessing the uncertainty
in these estimates can be challenging. I propose a form of sample
splitting for networks that partitions the data into statistically
independent \textit{branches}, each of which can be used to compute
an unbiased estimate of the parameters of interest in two-way fixed
effects models. These branches facilitate uncertainty quantification,
moment estimation, and shrinkage. Drawing on results from the graph
theory literature on tree packing, I develop algorithms to efficiently
extract branches from large networks. I illustrate these techniques
using a benchmark dataset from Veneto, Italy that has been widely
used to study firm wage effects.
\end{abstract}

\lyxaddress{\begin{center}
Keywords: sample splitting, tree packing, reproducibility, firm wage
effects
\par\end{center}}

\newpage{}

\section{Introduction}

Fixed effects methods have emerged as an important tool for scientific
communication in empirical economics, allowing researchers to summarize
complex empirical patterns found in vast administrative databases
with minimal loss of fidelity. A single research team can rarely envision,
much less exploit, all of the potential uses of such granular summaries.
It has therefore become routine for estimates to be publicly posted
online or hosted by statistical agencies for use by other researchers. 

For instance, the Opportunity Insights (OI) group provides public
estimates of intergenerational mobility at the census tract level
derived from fitting complex fixed effects models to US tax records
\citep{chetty2018impactsii}. Likewise, estimates of firm wage fixed
effects derived from German social security records have been developed
for secondary use by research teams in partnership with the Institute
for Employment Research \citep{card2015chk,bellmann2020akm}. Estimates
enabling secondary analysis of teacher and school value added have
also been shared between research teams \citep[e.g.,][]{chamberlain2013predictive}.

In each of these literatures, the workhorse models are linear and
involve two (or more) high-dimensional sets of fixed effects. Least
squares estimates of these parameters are unbiased but will inevitably
be noisy. Unfortunately, the properties of this noise are often poorly
documented. One problem is that conventional heteroscedasticity-robust
standard errors may be inconsistent when estimating models of moderately
high dimension, a problem that also afflicts the bootstrap \citep{cattaneo2018inference,el2018can}.
Another is that it is typically infeasible to release full covariance
matrices for more than a few thousand estimated effects, as the number
of distinct entries in such matrices grows quadratically. Moreover,
when the number of parameters being estimated is proportional to the
underlying sample size, the noise in the estimates may not be normally
distributed, in which case variance matrices provide incomplete guides
to uncertainty.

This ambiguity regarding the stochastic properties of published fixed
effects estimates presents difficulties for secondary analysis. For
example, \citet{mogstad2024inference} note that lack of information
on the covariance of the quasi-experimental place fixed effects developed
by \citet{chetty2018impactsii} is one factor that leads them to focus
their analysis on simpler cross-sectional estimates of intergenerational
mobility. In contrast to the OI place effects, estimates of firm wage
fixed effects are almost always provided without standard errors.
As discussed in \citet{kline2024firm}, researchers often treat these
estimated firm effects as outcomes in second-step regressions, reporting
(downstream) standard errors that may be severely biased by the implicit
assumption that the estimated effects are mutually independent.

This paper proposes breaking two-way fixed effects estimates into
statistically independent \textit{branches} as a means of transparently
communicating uncertainty. Each branch corresponds to a distinct subsample
of the microdata within which unbiased estimates of all target parameters
can be constructed. Building on a graph-theoretic interpretation of
two-way fixed effects models developed in \citet{kline2024firm},
the branches are constructed from edge-disjoint spanning trees of
the \textit{mobility network}---a graph representing the evolution
of group memberships in the data. Though I will focus on a setting
where this network captures the structure of worker moves between
firms, the spanning tree construction applies broadly to contexts
involving two-dimensional heterogeneity, including student--teacher,
patient--doctor, and judge--district pairings. Edges not included
in any spanning tree are appended to the last tree's branch, which
ensures the full-sample fixed effects estimator decomposes linearly
into branch-specific estimates.

Publicly releasing branches allows outside researchers to transparently
assess uncertainty in published fixed effect estimates and in projections
of those estimates onto observable covariates. Branches also simplify
more advanced tasks such as estimating moments of the fixed effects
and shrinkage that have proved challenging to extend to high-dimensional
fixed effects settings with heteroscedasticity \citep{kwon2023optimal,cheng2025optimal,robin2025}.
I illustrate these ideas using as an example the estimation of firm
wage fixed effects in a benchmark dataset from the Italian province
of Veneto. 

The idea of releasing branches of fixed effects has several precedents
in the literature. First, it mirrors the common practice among statistical
agencies of releasing replicates to assess uncertainty in published
estimates derived from surveys. For example, the US Bureau of Labor
Statistics uses two independent replicates of inflation measures in
each geographic area to evaluate the sampling variance of inflation
measures reported in the Consumer Price Index \citep{bls_cpi_calculation_2025}.
Likewise, the American Community Survey uses 80 replicates to assess
margins of error in published estimates \citep{census_acs_variance_replicate_tables_2025}.
While replicates are designed to be independent and identically distributed,
the branched estimates considered here will generally not be identically
distributed, as different subsamples will tend to exhibit different
noise levels. 

A second precedent comes from recent empirical work with two-way fixed
effects models exploiting random splits of the microdata. \citet{goldschmidt2017rise},
\citet{sorkin2018ranking}, \citet{drenik2023paying}, and \citet{jager2024worker}
all randomly split worker--firm datasets into half-samples in order
to obtain independent estimates of the subset of firm fixed effects
identified in both samples. Likewise, \citet{silver2021haste} uses
a split-sample approach to estimate the variance of physician value
added, while \citet{card2024industry} use a similar approach to estimate
the variance of industry wage effects. The popularity of random splitting
in these contexts derives from the simplicity of the covariance-based
methods that can be employed to account for estimation error. However,
an important limitation of random splitting approaches is that covariances
can only be computed among the random subset of parameters for which
multiple estimates turn out to be available. 

Rather than consider a random estimand, the branching approach deterministically
prunes and partitions the data so that the pertinent model parameters
remain estimable in each split. The partitioning strategy builds on
results from the graph theory literature on tree packing problems
\citep{nash1961edge,tutte1961decomposing,roskind_tarjan_1985}. First,
the microdata are restricted to the largest $k$-edge-connected component
($k$-ECC): a subgraph that remains connected after the removal of
any $k-1$ edges. In the worker--firm setting, restricting to a $k$-ECC
with $k>1$ is a natural refinement of the ubiquitous practice of
pruning to the largest connected component \citep{abowd2002computing}.
After pruning, the $k$-ECC is packed with edge-disjoint spanning
trees. These trees are then used to build branches that fully partition
the microdata.

Though the tree packing algorithm is deterministic, the typical graph
will admit a large number of alternative packings. To aid reproducibility,
it is useful to re-pack the graph many times. One can then compute
parameter estimates for alternate branch definitions based on different
packings of the graph. I discuss how to compute random packings and
find in an application that the standard deviation of parameter estimates
across branch definitions is often small enough to be addressed by
averaging over fewer than one hundred packings.

In addition to yielding a deterministic estimand, this \textit{prune-and-pack}
approach ensures that each fixed effect corresponding to a node in
the $k$-ECC has as many independent estimates as there are branches
of the graph. I show that the pooled two-way fixed effects estimator
can be written as a linear combination of branch-specific estimates.
These linear combinations, which amount to influence function contributions,
can be released alongside the branch-specific estimates to simplify
the quantification of uncertainty in the full-sample estimates.

The Veneto data provide a challenging test case for the branching
approach, as many firms are connected by a single worker move. In
this low connectivity environment, constructing even two branches
requires limiting attention to 29\% of the firms in the largest connected
set. However, the estimates turn out to be remarkably insensitive
to further pruning of the sample despite the fact that larger firms
tend to be better connected. Among other interesting findings, the
analysis reveals that the firm wage fixed effects in these pruned
samples are left-skewed and heavy-tailed. In other empirical literatures
where the relevant graphs tend to exhibit higher edge-connectivity---e.g.,
empirical work on teacher value added or models of place effects---it
may be feasible to extract dozens of branches without meaningfully
narrowing the scope of investigation.

The rest of the paper is structured as follows. The next section previews
the basic properties of branches and describes their potential uses
in greater detail. Section \ref{sec:Two-way-fixed-effects,} reviews
the algebra of two-way fixed effects estimators. Section \ref{sec:A-graph-representation}
introduces a graph-theoretic interpretation of the model. Section
\ref{subsec:Trees-and-branches} defines branches algebraically and
derives a representation of the fixed effects estimator as a linear
combination of branch-specific estimators. Section \ref{sec:Tree-packing}
reviews the theory underlying tree packing problems and outlines the
Prune-and-Pack algorithm. Section \ref{sec:Application} illustrates
the use of branches for quantifying uncertainty, moment estimation,
and shrinkage via exercises involving firm wage fixed effects derived
from administrative data in Veneto, Italy.

\section{The blessings of branches\protect\label{sec:Basic-idea}}

Before getting into the weeds of how to build branches, it is useful
to preview how branches can simplify many modern estimation tasks.
Suppose that we are interested in some $J\times1$ parameter vector
$\psi$ and have constructed a least squares estimator $\hat{\psi}\in\mathbb{R}^{J}$
of $\psi$ that is unbiased. 

Branches are formed by partitioning the microdata in a way that allows
the construction of $M\geq2$ mutually independent estimates $\left\{ \hat{\psi}_{b}\right\} _{b=1}^{M}$,
each of which obeys $\mathbb{E}\left[\hat{\psi}_{b}\right]=\psi$.
While each branch estimate has the same mean, their sampling distributions
may differ. In particular, the $J\times J$ variance matrix $\mathbb{V}\left[\hat{\psi}_{b}\right]$
is not assumed to be diagonal or identical across branches.

Since the branches correspond to disjoint subsamples, the full-sample
least squares estimator can be decomposed as the sum
\begin{equation}
\hat{\psi}=\sum_{b=1}^{M}\boldsymbol{C}_{b}\hat{\psi}_{b}\equiv\sum_{b=1}^{M}\hat{\phi}_{b},\label{eq:decomp}
\end{equation}
where each $\boldsymbol{C}_{b}$ is a known (i.e., fixed) $J\times J$
matrix and these matrices obey $\sum_{b=1}^{M}\boldsymbol{C}_{b}=\boldsymbol{I}$.
Note that each element of $\hat{\phi}_{b}\in\mathbb{R}^{J}$ is a
linear combination of the unbiased estimates in $\hat{\psi}_{b}\in\mathbb{R}^{J}$.
Thus, the full-sample fixed effects estimator effectively utilizes
each branch to estimate a different linear combination of $\psi$.
That is,
\[
\mathbb{E}\left[\hat{\phi}_{b}\right]\equiv\phi_{b}=\boldsymbol{C}_{b}\psi.
\]

It is also useful to construct leave-out estimators $\left\{ \hat{\psi}_{-b}\right\} _{b=1}^{M}$,
where each $\hat{\psi}_{-b}$ gives the least squares estimator fit
to the microdata in all branches but $b$. By assumption $\mathbb{E}\left[\hat{\psi}_{-b}\right]=\psi$.
Therefore, a statistically independent estimator of $\phi_{b}$ is
given by $\hat{\phi}_{-b}=\boldsymbol{C}_{b}\hat{\psi}_{-b}.$ As
detailed in Section \ref{subsec:Trees-and-branches}, after the data
have been partitioned into branches, computing each $\hat{\phi}_{b}$
and $\hat{\phi}_{-b}$ is no more difficult than computing $\hat{\psi}$. 

My proposal is for researchers to share the branch-level vectors $\left\{ \hat{\psi}_{b},\hat{\phi}_{b},\hat{\phi}_{-b}\right\} _{b=1}^{M}$
with the public. Note that $\hat{\psi}_{b}$, $\hat{\phi}_{b}$, and
$\hat{\phi}_{-b}$ are each $J\times1$ vectors that can be stored
as columns in a spreadsheet alongside the full-sample estimate $\hat{\psi}$.
By the decomposition property in \eqref{eq:decomp}, releasing the
full-sample estimate is redundant once columns corresponding to the
$\left\{ \hat{\phi}_{b}\right\} _{b=1}^{M}$ vectors have been provided.

Having outlined what branches are, I now preview how access to $\left\{ \hat{\psi}_{b},\hat{\phi}_{b},\hat{\phi}_{-b}\right\} _{b=1}^{M}$
can facilitate three common empirical tasks:\bigskip{}

1. \textbf{Quantifying uncertainty}

Since the branches are independent, the $J\times J$ variance matrix
of $\hat{\psi}$ can be written 
\[
\mathbb{V}\left[\hat{\psi}\right]=\sum_{b=1}^{M}\mathbb{V}\left[\hat{\phi}_{b}\right]=\sum_{b=1}^{M}\boldsymbol{C}_{b}\mathbb{V}\left[\hat{\psi}_{b}\right]\boldsymbol{C}_{b}'\equiv\Sigma.
\]

Branch independence and unbiasedness ensure that
\begin{eqnarray*}
\mathbb{E}\left[\hat{\phi}_{b}\left(\hat{\phi}_{b}-\hat{\phi}_{-b}\right)'\right] & = & \mathbb{E}\left[\hat{\phi}_{b}\hat{\phi}_{b}'\right]-\mathbb{E}\left[\hat{\phi}_{b}\hat{\phi}_{-b}'\right]\\
 & = & \left(\phi_{b}\phi_{b}'+\mathbb{V}\left[\hat{\phi}_{b}\right]\right)-\phi_{b}\phi_{b}'\\
 & = & \mathbb{V}\left[\hat{\phi}_{b}\right].
\end{eqnarray*}
Averaging $\hat{\phi}_{b}\left(\hat{\phi}_{b}-\hat{\phi}_{-b}\right)'$
with its transpose and summing across the $M$ branches yields a cross-fitting
variance estimator of the sort proposed by \citet{kline2020leave}:
\[
\hat{\Sigma}=\frac{1}{2}\sum_{b=1}^{M}\left[\left(\hat{\phi}_{b}-\hat{\phi}_{-b}\right)\hat{\phi}_{b}'+\hat{\phi}_{b}\left(\hat{\phi}_{b}-\hat{\phi}_{-b}\right)'\right].
\]
This estimator is symmetric $\left(\hat{\Sigma}=\hat{\Sigma}'\right)$
and unbiased $\left(\mathbb{E}\left[\hat{\Sigma}\right]=\Sigma\right)$
but need not be positive semi-definite in finite samples. 

The individual entries in $\hat{\Sigma}$ will tend to be imprecise
when $M$ is small. Fortunately, interest often centers on low-dimensional
quadratic functions of $\Sigma$, which are easier to estimate than
any particular entry in this matrix. Suppose, for example, that one
wishes to estimate the coefficient vector $\gamma$ from a projection
of $\psi$ onto the column space of a matrix of covariates $\boldsymbol{X}$.
Treating $\boldsymbol{X}$ as fixed and assuming that $\boldsymbol{S}_{xx}=\boldsymbol{X}'\boldsymbol{X}$
is full rank, we can write the estimand $\gamma=\boldsymbol{S}_{xx}^{-1}\boldsymbol{X}'\psi$
and the corresponding estimator $\hat{\gamma}=\boldsymbol{S}_{xx}^{-1}\boldsymbol{X}'\hat{\psi}$.
It follows that $\mathbb{E}\left[\hat{\gamma}\right]=\gamma$ and
$\mathbb{V}\left[\hat{\gamma}\right]=\boldsymbol{S}_{xx}^{-1}\boldsymbol{X}'\Sigma\boldsymbol{X}\boldsymbol{S}_{xx}^{-1}$.
Hence, an unbiased estimate of the variance of a second-step linear
projection is $\hat{\mathbb{V}}\left[\hat{\gamma}\right]=\boldsymbol{S}_{xx}^{-1}\boldsymbol{X}'\hat{\Sigma}\boldsymbol{X}\boldsymbol{S}_{xx}^{-1}$.
The entries of $\hat{\mathbb{V}}\left[\hat{\gamma}\right]$ will tend
to be significantly more precise than the entries of $\hat{\Sigma}$
when $J$ is large relative to $\dim\left(\boldsymbol{S}_{xx}\right)$.\footnote{Assume that $\lambda_{\min}\left(\boldsymbol{S}_{xx}\right)>\kappa>0$,
where $\lambda_{\min}\left(\boldsymbol{S}_{xx}\right)$ gives the
smallest eigenvalue of $\boldsymbol{S}_{xx}$. Then, as $J\rightarrow\infty$
(with $\dim\left(\boldsymbol{S}_{xx}\right)$ fixed), the noise in
$\hat{\mathbb{V}}\left[\hat{\gamma}\right]$ will become negligible
relative to the noise in $\hat{\Sigma}$.}

\bigskip{}

2. \textbf{Moment estimation}

Let $\odot$ denote the element-wise product operator. For any two
branches $b$ and $\ell$, independence implies $\mathbb{E}\left[\hat{\psi}_{b}\odot\hat{\psi}_{\ell}\right]=\psi\odot\psi$.
Hence, an unbiased estimate of the average squared entry in $\psi$
(i.e., the second uncentered moment of $\psi$) is $\frac{1}{J}\boldsymbol{1}'\left(\hat{\psi}_{b}\odot\hat{\psi}_{\ell}\right)$,
where $\boldsymbol{1}$ is a $J\times1$ vector of ones. A more precise
estimator can be had by averaging across all $\binom{M}{2}$ distinct
pairs of branches: 
\[
\frac{1}{J}\boldsymbol{1}'\left[\frac{2}{M\left(M-1\right)}\sum_{b=1}^{M}\sum_{\ell<b}\hat{\psi}_{b}\odot\hat{\psi}_{\ell}\right].
\]

Likewise, third moments can be estimated by averaging products over
all $\binom{M}{3}$ distinct triples of branches: 
\[
\frac{1}{J}\boldsymbol{1}'\left(\binom{M}{3}^{-1}\sum_{b_{1}=1}^{M}\sum_{b_{2}<b_{1}}\sum_{b_{3}<b_{2}}\hat{\psi}_{b_{1}}\odot\hat{\psi}_{b_{2}}\odot\hat{\psi}_{b_{3}}\right).
\]
By induction, moments up to order $M$ can be estimated via leveraging
moment conditions of the form:
\[
\mathbb{E}\left[\hat{\psi}_{1}\odot\hat{\psi}_{2}\odot\dots\odot\hat{\psi}_{M}\right]=\psi^{\circ M},
\]
where the $\circ M$ superscript denotes raising the entries of a
vector to the $M$th power elementwise. A general class of unbiased
estimators for weighted moments of order less than or equal to $M$
is provided in Section \ref{subsec:Moment-estimation}.

\bigskip{}

3. \textbf{Shrinkage}

Standard empirical Bayes shrinkage arguments are predicated on the
assumption that the distribution of noise is known ex-ante \citep{waltersEB2024}.
A recent paper by \citet{ignatiadis2023empirical} proposes a best
predictor based on independent and identically distributed replicates
that adapts to the unknown noise distribution. Insights from this
paper carry over to the problem of predicting the elements of $\psi$
given independent branch estimates $\left\{ \hat{\psi}_{b}\right\} _{b=1}^{M}$
that are not identically distributed. The proposed procedure, which
builds on a suggestion by \citet{krutchkoff1967supplementary}, consists
of running a series of regressions of each entry of $\hat{\psi}_{b}$
against the values of that entry in the other branches. This process
is repeated for each choice of $b$. Finally, the predicted values,
which generally shrink the noisy branch-specific estimates, are averaged. 

\section{The AKM model and its algebra\protect\label{sec:Two-way-fixed-effects,}}

Constructing branches involves carefully splitting the sample so as
to preserve estimability of the model. It is useful to review the
basic algebra of two-way fixed effects estimators using as our running
example the worker--firm setting of \citet{abowd1999high}, henceforth
AKM. The treatment here will depart minimally from the setup in \citet{kline2024firm}.
Suppose there are $N$ workers and $J$ firms. For simplicity, I will
assume there are only 2 time periods, that all workers switch employers
between these periods, and that there are no time varying covariates.\footnote{Adjustments for covariates can be handled in a first step, in which
case the relevant outcomes $y_{it}$ becomes residualized wages. The
AKM model has no implications for wage dynamics within a worker--firm
match. Therefore, when additional time periods are available, nothing
is lost by collapsing $y_{it}$ down to worker--firm match means.} The AKM model can be written:
\[
y_{it}=\alpha_{i}+\psi_{\mathbf{j}\left(i,t\right)}+\varepsilon_{it}
\]
where $y_{it}$ is the log wage of worker $i\in\left\{ 1,2,\dots,N\right\} \equiv\left[N\right]$
in period $t\in\left\{ 1,2\right\} $ and the function $\mathbf{j}:\left[N\right]\times\left\{ 1,2\right\} \rightarrow\left\{ 1,2,\dots,J\right\} \equiv\left[J\right]$
gives the identity of the firm that worker $i$ is paired with in
period $t$. The $\left\{ \alpha_{i}\right\} _{i=1}^{N}$ are person
effects that can be ported from one employer to another, whereas the
$\left\{ \psi_{j}\right\} _{j=1}^{J}$ capture firm effects that must
be forfeited when leaving employer $j$. Both the person and firm
effects are parameters -- i.e., they are fixed effects. In contrast,
the time-varying errors $\left\{ \varepsilon_{it}\right\} _{i=1,t=1}^{N,2}$
are stochastic. Following the convention in the literature, the errors
are assumed to each have mean zero, which amounts to a strict exogeneity
requirement.

Interest often centers on the firm effects, which, under assumptions
described in \citet{kline2024firm}, can be thought of as capturing
average treatment effects on wages of moving between particular firm
pairs. Adopting this perspective, we can eliminate the person effects
with a first differencing transformation
\begin{equation}
y_{i2}-y_{i1}=\psi_{\mathbf{j}\left(i,2\right)}-\psi_{\mathbf{j}\left(i,1\right)}+\underbrace{\varepsilon_{i2}-\varepsilon_{i1}}_{\equiv u_{i}}.\label{eq:FD}
\end{equation}
Thus, each worker's wage change offers a noisy estimate of the difference
in firm effects between an origin firm $\mathbf{j}\left(i,1\right)$
and a destination firm $\mathbf{j}\left(i,2\right)$. The noise term
$u_{i}=\varepsilon_{i2}-\varepsilon_{i1}$ captures idiosyncratic
differences across workers in the wage changes they experience when
making this transition. These differences might reflect treatment
effect heterogeneity across workers in the effects of switching firms
or idiosyncratic shocks that would have occurred even in the absence
of the move (e.g., a health shock). Accordingly, the notion of uncertainty
that we seek to address is how estimates of the firm effects might
change if a different set of noise contributions $\left\{ u_{i}\right\} _{i=1}^{N}$
had been drawn. 

Denote the set of origin-destination pairs traversed by workers between
the two periods as 
\[
\mathcal{P}=\left\{ \left(o,d\right)\in\left[J\right]^{2}:\mathbf{j}\left(i,1\right)=o,\mathbf{j}\left(i,2\right)=d\text{ \ensuremath{\text{for some \ensuremath{i\in\left[N\right]}}}}\text{ and }o\neq d\right\} .
\]
The number of ordered pairs in this set is $\left|\mathcal{P}\right|$.
The \textit{unordered} pair of firms associated with worker $i$ will
be denoted $\left\{ \mathbf{j}\left(i,1\right),\mathbf{j}\left(i,2\right)\right\} $.
As detailed in Section \ref{subsec:Trees-and-branches}, such unordered
pairs will serve as the building blocks of branches. In particular,
any two workers moving between the same firms (in either direction)
will be assigned to the same branch. 

To ensure independence across branches, it suffices to assume independence
of the noise across distinct unordered firm pairs.
\begin{assumption}[Dyad independence\label{Pair-independence}]
For all $\ensuremath{\left(o,d\right)\in}\mathcal{P}$,
\[
\left\{ u_{i}:\left\{ \mathbf{j}\left(i,1\right),\mathbf{j}\left(i,2\right)\right\} =\left\{ o,d\right\} \right\} \perp\left\{ u_{i}:\left\{ \mathbf{j}\left(i,1\right),\mathbf{j}\left(i,2\right)\right\} \neq\left\{ o,d\right\} \right\} .
\]
\end{assumption}
By construction, each unordered pair $\left\{ o,d\right\} $ is assigned
to exactly one branch. Hence, Assumption \ref{Pair-independence}
implies mutual independence across branches.

Note that the dyad independence assumption allows workers moving between
the same firms to exhibit correlated noise. Such correlation could
arise, for example, if worker departures from a firm $o$ to a firm
$d$ are prompted by a shared human capital shock (e.g., a team of
employees is poached by a better paying firm in response to their
accomplishments at the origin firm). In practice, Section \ref{sec:Application}
will demonstrate that branch-based estimates of variance components
are remarkably similar to published estimates derived from the stronger
assumption that the $\left\{ u_{i}\right\} _{i=1}^{N}$ are mutually
independent across workers. A sampling based justification for the
traditional mutually independent noise representation is pursued in
\ref{sec:Independence-and-uncertainty}. 

Dyad independence could be violated by a shock that strikes all employees
of a firm in a single period. For example, if period 1 is unusually
productive for firm $o$ relative to period 2, we might expect the
wages offered by that firm to be higher in period 1. In such a case,
firm $o$'s wage effect itself would become unstable. Recent work
by \citet{lachowska2023firm} and \citet{engbom2023firm} provides
evidence that firm wage effects are extremely persistent, suggesting
transitory firm shocks are unlikely to generate quantitatively meaningful
violations of Assumption \ref{Pair-independence} over short time
horizons.

\subsection{The least squares estimator}

It is useful to write \eqref{eq:FD} in matrix notation as
\begin{eqnarray}
\boldsymbol{y}_{2}-\boldsymbol{y}_{1} & = & \left(\boldsymbol{F}_{2}-\boldsymbol{F}_{1}\right)\psi+\boldsymbol{u}\nonumber \\
 & \equiv & \boldsymbol{P}\boldsymbol{B}'\psi+\boldsymbol{u}\mathrm{,}\label{eq:dy}
\end{eqnarray}
where $\boldsymbol{F}_{t}$ is an $N\times J$ matrix of firm indicators
with $i$th row and $j$th column entry $1\left\{ \mathbf{j}\left(i,t\right)=j\right\} $,
$\psi$ is a $J\times1$ vector of firm effects, and $\boldsymbol{u}=\left(u_{1},\dots,u_{N}\right)^{\prime}$
is an $N\times1$ vector of noise contributions.

The $J\times\left|\mathcal{P}\right|$ matrix $\boldsymbol{B}$ is
known in the graph theory literature as the (signed) incidence matrix.
Each row of $\boldsymbol{B}$ corresponds to a particular firm, while
each column corresponds to an origin-destination pair. Denoting the
$p$th ordered pair in the set $\mathcal{P}$ by $\left(o_{p},d_{p}\right)$,
the entry in the $j$th row and $p$th column of $\boldsymbol{B}$
can be written $1\left\{ d_{p}=j\right\} -1\left\{ o_{p}=j\right\} $.
Note that each column of $\boldsymbol{B}$ has exactly two non-zero
entries, one of which takes the value $-1$, representing departure
from some origin firm, and one of which takes the value $+1$, representing
arrival at a destination firm. 

The $N\times\left|\mathcal{P}\right|$ matrix $\boldsymbol{P}$ consists
of origin-destination pair indicators. The entry in the $i$th row
and $p$th column of $\boldsymbol{P}$ can be written $1\left\{ \mathbf{j}\left(i,1\right)=o_{p},\mathbf{j}\left(i,2\right)=d_{p}\right\} $.
Thus, 
\[
\hat{\Delta}=\left(\boldsymbol{P}'\boldsymbol{P}\right)^{-1}\boldsymbol{P}'\left(\boldsymbol{y}_{2}-\boldsymbol{y}_{1}\right)
\]
gives the $\left|\mathcal{P}\right|\times1$ vector of mean wage changes
between each origin-destination pair. Note that $\boldsymbol{P}'\boldsymbol{P}\equiv\boldsymbol{W}$
is a diagonal $\left|\mathcal{P}\right|\times\left|\mathcal{P}\right|$
matrix giving the number of workers who move from each origin to each
destination firm. Hence, $\boldsymbol{P}'\left(\boldsymbol{y}_{2}-\boldsymbol{y}_{1}\right)=\boldsymbol{W}\hat{\Delta}$.
In what follows, the matrices $\boldsymbol{B}$ and $\boldsymbol{P}$
will be treated as fixed.

The identity $\boldsymbol{F}_{2}-\boldsymbol{F}_{1}=\boldsymbol{P}\boldsymbol{B}'$
provides a useful way to separate the sorts of moves present in the
data from how many workers move between each firm pair. Premultiplying
the system in \eqref{eq:dy} by $\boldsymbol{B}\boldsymbol{P}'$ yields
\[
\boldsymbol{B}\boldsymbol{W}\hat{\Delta}=\boldsymbol{B}\boldsymbol{W}\boldsymbol{B}'\psi+\boldsymbol{B}\boldsymbol{P}'\boldsymbol{u}.
\]
 Strict exogeneity of the errors implies $\mathbb{E}\left[\boldsymbol{u}\right]=0$.
This yields the moment condition
\[
\mathbb{E}\left[\boldsymbol{B}\boldsymbol{W}\hat{\Delta}\right]=\boldsymbol{B}\boldsymbol{W}\boldsymbol{B}'\psi\equiv\boldsymbol{L}\psi.
\]

The $J\times J$ matrix $\boldsymbol{L}=\boldsymbol{B}\boldsymbol{W}\boldsymbol{B}'$
is known in graph theory as the (weighted) Laplacian matrix. When
the mobility network is connected -- a concept we will review in
more detail below -- the Laplacian will have rank $J-1$, implying
$\psi$ is identified up to a constant. It is common to resolve this
indeterminacy by normalizing one of the firm effects to zero. I will
follow this convention by setting the first entry of $\psi$ to zero,
which yields the reduced system
\[
\mathbb{E}\left[\boldsymbol{B}_{\left(1\right)}\boldsymbol{W}\hat{\Delta}\right]=\boldsymbol{\boldsymbol{L}}_{\left(1\right)}\psi_{\left(1\right)},
\]
where $\boldsymbol{B}_{\left(1\right)}$ is the submatrix obtained
by removing the first row from $\boldsymbol{B}$, $\psi_{\left(1\right)}$
is the vector of length $J-1$ formed by removing the first entry
from $\psi$, and $\boldsymbol{L}_{\left(1\right)}=\boldsymbol{B}_{\left(1\right)}\boldsymbol{W}\boldsymbol{B}'_{\left(1\right)}$
is the matrix formed by dropping the first row and column of $\boldsymbol{L}$.
Solving this reduced system for $\psi_{\left(1\right)}$ yields the
least squares estimator
\begin{equation}
\hat{\psi}_{\left(1\right)}=\boldsymbol{\boldsymbol{L}}_{\left(1\right)}^{-1}\boldsymbol{B}_{\left(1\right)}\boldsymbol{W}\hat{\Delta}.\label{eq:OLS}
\end{equation}
Thus, the estimated firm effects are a Laplacian normalized linear
combination of the average wage changes of movers along all origin-destination
pairs.

\subsection{Pooling data on firm pairs}

When worker moves are present in both directions between a pair of
firms, some columns of the incidence matrix $\boldsymbol{B}$ will
be mirror images of each other. In such cases, it is possible to further
simplify \eqref{eq:OLS} by expressing the estimated firm effects
as a linear combination of a shorter vector of oriented average wage
changes $\overrightarrow{\Delta}$ that difference the entries of
$\hat{\Delta}$ along opposite directions of worker flow. In addition
to simplifying computation of $\hat{\psi}$, this pooled representation
will provide a foundation for the next section, which develops an
interpretation of the AKM model as an undirected graph. As detailed
in Section \ref{subsec:Trees-and-branches}, each branch-specific
estimate $\hat{\psi}_{b}$ will ultimately correspond to a linear
combination of a mutually exclusive subset of the entries of $\overrightarrow{\Delta}$.

To illustrate the basic idea, suppose that our data only measure moves
between two firms. In such a case, we can write $\hat{\Delta}=\left(\hat{\Delta}_{+},\hat{\Delta}_{-}\right)^{\prime}$,
where $\hat{\Delta}_{+}$ is the average wage change of workers moving
from firm 2 to firm 1, and $\hat{\Delta}_{-}$ is the average wage
change of workers moving from firm 1 to firm 2. Since $\mathbb{E}\left[\hat{\Delta}_{+}\right]=\psi_{1}-\psi_{2}$
and $\mathbb{E}\left[\hat{\Delta}_{-}\right]=\psi_{2}-\psi_{1}$,
it is natural to pool this information. Letting $n_{+}$ be the number
of movers from firm 2 to firm 1 and $n_{-}$ the number of movers
from firm $1$ to firm $2$, we can define $\overrightarrow{\Delta}=\left(n_{+}\hat{\Delta}_{+}-n_{-}\hat{\Delta}_{-}\right)/\left(n_{+}+n_{-}\right)$
(scalar in this toy example), which provides an unbiased estimate
of $\psi_{1}-\psi_{2}$. This mover-weighted average provides an efficient
pooling of the information in $\hat{\Delta}_{+}$ and $\hat{\Delta}_{-}$
under the auxiliary assumption that the worker level errors $\boldsymbol{u}$
are homoscedastic. 

When $Q\leq\left|\mathcal{P}\right|/2$ firm pairs have worker flows
in both directions, the incidence matrix can be partitioned as 
\[
\underbrace{\boldsymbol{B}}_{J\times\left|\mathcal{P}\right|}=\left[\underbrace{\boldsymbol{D}}_{J\times Q},\underbrace{-\boldsymbol{D}}_{J\times Q},\underbrace{\boldsymbol{R}}_{J\times\left(\left|\mathcal{P}\right|-2Q\right)}\right],
\]
where the matrices $\left(\boldsymbol{D},-\boldsymbol{D}\right)$
capture moves in opposite directions between firm pairs and the residual
incidence matrix $\boldsymbol{R}$ represents moves between firm pairs
that only experience flows in one direction. Note that this representation
is not unique: there are $2^{Q}$ possible choices of $\boldsymbol{D}$,
each of which amounts to treating one member of the firm pair as the
origin and the other as the destination. 

Given any orientation $\boldsymbol{D}$, the vector of wage changes
between origin-destination pairs can be partitioned
\[
\underbrace{\hat{\Delta}}_{\left|\mathcal{P}\right|\times1}=\left[\underbrace{\hat{\Delta}_{+}^{\prime}}_{1\times Q},\underbrace{\hat{\Delta}_{-}^{\prime}}_{1\times Q},\underbrace{\hat{\Delta}_{R}^{\prime}}_{1\times\left(\left|\mathcal{P}\right|-2Q\right)}\right]^{\prime},
\]
where $\hat{\Delta}_{+}$ gives the average wage changes of workers
moving from origins to destinations dictated by $\boldsymbol{D}$,
$\hat{\Delta}_{-}$ gives the average wage changes workers moving
between those same firms in the opposite direction, and $\hat{\Delta}_{R}$
gives the average wage changes associated with $\boldsymbol{R}$.
The corresponding weight matrix can be partitioned 
\[
\underbrace{\boldsymbol{W}}_{\left|\mathcal{P}\right|\times\left|\mathcal{P}\right|}=\left[\begin{array}{ccc}
\underbrace{\boldsymbol{W}_{+}}_{Q\times Q} & 0 & 0\\
0 & \underbrace{\boldsymbol{W}_{-}}_{Q\times Q} & 0\\
0 & 0 & \underbrace{\boldsymbol{W}_{R}}_{\left(\left|\mathcal{P}\right|-2Q\right)\times\left(\left|\mathcal{P}\right|-2Q\right)}
\end{array}\right],
\]
where $\boldsymbol{W}_{+}$ captures the number of workers moving
in one direction between a pair and $\boldsymbol{W}_{-}$ the number
moving in the opposite direction. Hence, $\boldsymbol{B}_{\left(1\right)}=\left[\boldsymbol{D}_{\left(1\right)},-\boldsymbol{D}_{\left(1\right)},\boldsymbol{R}_{\left(1\right)}\right]$
and $\boldsymbol{B}_{\left(1\right)}\boldsymbol{W}\hat{\Delta}=\boldsymbol{D}_{\left(1\right)}\left(\boldsymbol{W}_{+}\hat{\Delta}_{+}-\boldsymbol{W}_{-}\hat{\Delta}_{-}\right)+\boldsymbol{R}_{\left(1\right)}\boldsymbol{W}_{R}\hat{\Delta}_{R}$. 

Plugging the latter expression into \eqref{eq:OLS} reveals that the
OLS estimator can be written
\begin{eqnarray}
\hat{\psi}_{\left(1\right)} & = & \boldsymbol{L}_{\left(1\right)}^{-1}\left[\boldsymbol{D}_{\left(1\right)}\left(\boldsymbol{W}_{+}\hat{\Delta}_{+}-\boldsymbol{W}_{-}\hat{\Delta}_{-}\right)+\boldsymbol{R}_{\left(1\right)}\boldsymbol{W}_{R}\hat{\Delta}_{R}\right]\nonumber \\
 & = & \boldsymbol{L}_{\left(1\right)}^{-1}\left[\begin{array}{cc}
\boldsymbol{D}_{\left(1\right)} & \boldsymbol{R}_{\left(1\right)}\end{array}\right]\left[\begin{array}{cc}
\boldsymbol{W}_{D} & 0\\
0 & \boldsymbol{W}_{R}
\end{array}\right]\left(\begin{array}{c}
\hat{\Delta}_{D}\\
\hat{\Delta}_{R}
\end{array}\right)\nonumber \\
 & \equiv & \boldsymbol{L}_{\left(1\right)}^{-1}\underbrace{\bar{\boldsymbol{B}}_{\left(1\right)}}_{\left(J-1\right)\times\left(\left|\mathcal{P}\right|-Q\right)}\underbrace{\boldsymbol{\bar{W}}}_{\left(\left|\mathcal{P}\right|-Q\right)\times\left(\left|\mathcal{P}\right|-Q\right)}\underbrace{\overrightarrow{\Delta}}_{\left(\left|\mathcal{P}\right|-Q\right)\times1},\label{eq:undirected}
\end{eqnarray}
where $\boldsymbol{W}_{D}=\boldsymbol{W}_{-}+\boldsymbol{W}_{+}$
records the total number of workers moving between each of the $Q$
firm pairs with flows in both directions and $\hat{\Delta}_{D}=\boldsymbol{W}_{D}^{-1}\left(\boldsymbol{W}_{+}\hat{\Delta}_{+}-\boldsymbol{W}_{-}\hat{\Delta}_{-}\right)$
gives the average wage change between these firm pairs across an orientation
$\boldsymbol{D}$. 

Equation \eqref{eq:undirected} reveals that no information is lost
by collapsing the vector $\hat{\Delta}$ of average wage changes between
origin-destination pairs down to a shorter vector $\overrightarrow{\Delta}$
of oriented average wage changes between firm pairs. Importantly,
this representation holds for any choice of $\boldsymbol{D}$. The
Laplacian is likewise invariant to the choice of orientation $\boldsymbol{D}$
as:
\[
\boldsymbol{L}_{\left(1\right)}=\boldsymbol{B}_{\left(1\right)}\boldsymbol{W}\boldsymbol{B}_{\left(1\right)}^{\prime}=\boldsymbol{D}_{\left(1\right)}\left(\boldsymbol{W}_{+}+\boldsymbol{W}_{-}\right)\boldsymbol{D}_{\left(1\right)}^{\prime}+\boldsymbol{R}_{\left(1\right)}\boldsymbol{W}_{R}\boldsymbol{R}_{\left(1\right)}^{\prime}=\bar{\boldsymbol{B}}_{\left(1\right)}\boldsymbol{\bar{W}}\bar{\boldsymbol{B}}_{\left(1\right)}^{\prime}.
\]
The next section details how the Laplacian can be used to construct
an undirected graphical representation of worker mobility patterns.

\section{A graph-theoretic interpretation\protect\label{sec:A-graph-representation}}

I will now introduce a graph-theoretic interpretation of the AKM model
and the Laplacian $\boldsymbol{L}$, which we have already seen plays
a key role in determining estimability of the firm effects. By clarifying
when the least squares estimator $\hat{\psi}_{\left(1\right)}$ can
be computed, this discussion will provide a principled approach to
partitioning the sample into branches. 

Define the mobility graph $G=\left(V,E\right)$ of a two-way fixed
effects model as a finite set of vertices $V$ and a collection of
edges $E\subseteq\{(s,v)\in V\times V:s\neq v\}$ that connect pairs
of vertices. I will refer to individual edges by $e_{sv}$. In contrast
to the treatment in \citet{kline2024firm}, the edges will be viewed
here as undirected, which implies that $e_{sv}=e_{vs}$.

\begin{figure}[H]
\centering
\centering{}
\usetikzlibrary{matrix, positioning}

\begin{tikzpicture}[scale=1.2,
    node_style/.style={circle, draw=black, thick, minimum size=10mm, font=\sffamily},
    ecc_3_node/.style={node_style, fill=red!20},
    ecc_1_node/.style={node_style},
    tree1_edge/.style={blue, very thick, dash pattern=on 6pt off 2pt},
    tree2_edge/.style={red, ultra thick, dash pattern=on 1pt off 8pt},
    ecc_edge/.style={thick, black!70},
    edge_id/.style={font=\small, right=0.3}
]

\node[ecc_1_node] (N1) at (0,0)        {1};

\node[ecc_3_node] (N2) at (2,1)       {2};
\node[ecc_3_node] (N3) at (1.0,4.0)    {3};
\node[ecc_3_node] (N4) at (4.0,5.0)    {4};
\node[ecc_3_node] (N5) at (6.0,2.5)    {5};

\draw[ecc_edge] (N2) -- (N3) node[midway, edge_id] {$e_{23}$};
\draw[ecc_edge] (N2) -- (N4) node[midway, edge_id, , yshift=-8pt] {$e_{24}$};
\draw[ecc_edge] (N2) -- (N5) node[midway, edge_id] {$e_{25}$};
\draw[ecc_edge] (N3) -- (N5) node[midway, edge_id] {$e_{35}$};
\draw[ecc_edge] (N4) -- (N5) node[midway, edge_id] {$e_{45}$};
\draw[ecc_edge] (N3) -- (N4) node[midway, edge_id] {$e_{34}$};

\draw[ecc_edge] (N1) -- node[midway, edge_id] {$e_{12}$} (N2);

\draw[tree1_edge] (N2) -- (N3)   (N3) -- (N5)   (N5) -- (N4);

\draw[tree2_edge] (N2) -- (N4)   (N2) -- (N5);

\draw[tree2_edge] (N3) -- (N4);

\matrix (legend) [
    matrix of nodes,
    nodes={anchor=mid west,font=\small},
    column sep=1em,
    row sep=0.4em,
    anchor=west
] at ([xshift=1cm]current bounding box.east) {
  \tikz \draw[tree1_edge] (0,0) -- (6mm,0); & Tree 1 (dashed) \\
  \tikz \draw[tree2_edge] (0,0) -- (6mm,0); & Tree 2 (dotted) \\
  \node[ecc_3_node, minimum size=6mm] {}; & Vertices in 3-ECC \\
};

\end{tikzpicture}
\caption{A connected mobility graph with $J=5$ and $\left|E\right|=7$\protect\label{fig:mobility=000020graph}}
\end{figure}
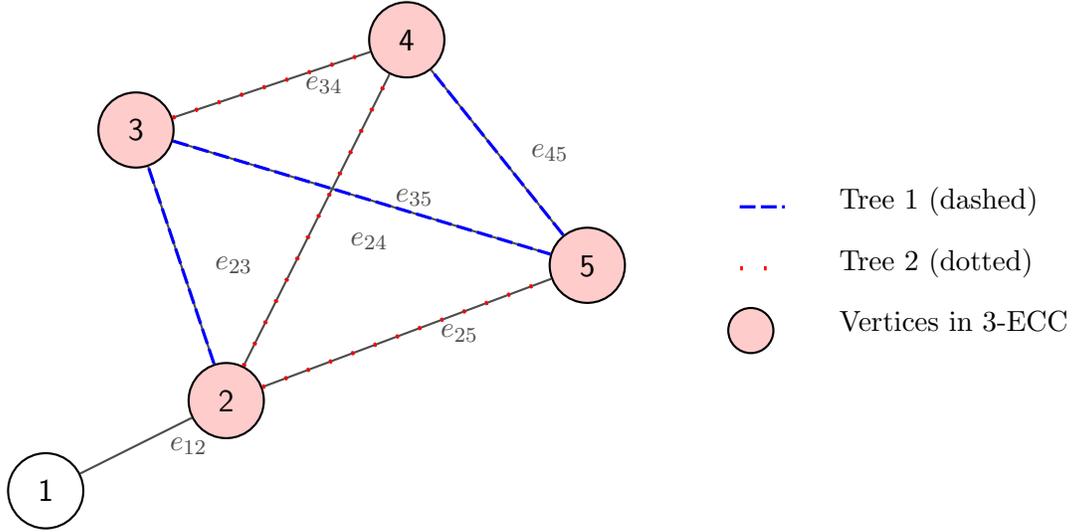

In the AKM model, the vertices are firms. Hence, $V=\left[J\right]$
and $\left|V\right|=J$. An edge joins a pair of firms $s$ and $v$
whenever at least one worker moves between the pair in either direction.
That is, in our earlier panel notation, 
\[
E=\{\left(s,v\right)\in\left[J\right]^{2}:\left\{ \mathbf{j}\left(i,1\right),\mathbf{j}\left(i,2\right)\right\} =\left\{ s,v\right\} \text{ for some }i\in\left[N\right]\text{ and }s\neq v\}.
\]
Thus, the mobility graph has $\left|E\right|=\left|\mathcal{P}\right|-Q$
edges, each of which corresponds to a column of $\bar{\boldsymbol{B}}=\left[\boldsymbol{D},\boldsymbol{R}\right]$
and can be associated with an entry in $\overrightarrow{\Delta}$.

The graph is represented algebraically by the unweighted Laplacian
matrix $\boldsymbol{L}^{u}=\bar{\boldsymbol{B}}\bar{\boldsymbol{B}}'$.
The $j$th diagonal entry of $\boldsymbol{L}^{u}$ gives firm $j$'s
degree in the network: the total number of edges incident to firm
$j$. If firms $j$ and $k$ have an edge between them, then the off-diagonal
entry $\boldsymbol{L}_{jk}^{u}$ will equal -1. Otherwise it will
equal zero.

Figure \ref{fig:mobility=000020graph} depicts a graph with 5 firms
and 7 edges. The unweighted Laplacian of this graph takes the form:

\[
\boldsymbol{L}^{u}=\left[\begin{array}{ccccc}
1 & -1 & 0 & 0 & 0\\
-1 & 4 & -1 & -1 & -1\\
0 & -1 & 3 & -1 & -1\\
0 & -1 & -1 & 3 & -1\\
0 & -1 & -1 & -1 & 3
\end{array}\right].
\]
In the figure, firms are depicted with circles and edges by lines.
The edge names are displayed to the right of each edge. 

I will now introduce some terms of art that are useful for describing
networks. A walk is a sequence of edges that join a set of firms,
a trail is a walk with no repeated edges, and a path is a trail with
no repeated firms. Hence, the sequence $\left\{ e_{23},e_{35},e_{25}\right\} $
is a trail that is not a path.

A graph is connected if there is a path from any firm to any other
firm. The edge connectivity $\lambda\left(G\right)$ of a graph is
the number of edges that need to be removed from the graph for it
to become disconnected. Formally,
\[
\lambda\left(G\right)=\min\Bigl\{\left|S\right|:S\subseteq E,\;\text{and }(V,E\setminus S)\text{ is disconnected}\Bigr\}.
\]
The graph depicted in Figure \ref{fig:mobility=000020graph} has $\lambda\left(G\right)=1$
because removing edge $e_{12}$ severs firm 1 from the network. 

A graph is $k$-edge-connected if $\lambda\left(G\right)\geq k$.
Equivalently, a $k$-edge-connected graph has at least $k$ edge-disjoint
paths between any two vertices \citep{menger1927allgemeinen}. A $k$-edge-connected
component ($k$-ECC) is a maximal subgraph -- the largest collection
of vertices and edges in the original graph -- that is $k$-edge-connected.
Figure \ref{fig:mobility=000020graph} has a single 3-ECC, the vertices
of which are shaded red. Removing any two edges from the 3-ECC fails
to disconnect the graph, while removing the edges $\left\{ e_{23},e_{24},e_{25}\right\} $
partitions the network into two separate connected components. Hence,
this 3-ECC is not also a 4-ECC.

A tree is a connected graph for which there is a unique path between
any pair of firms. A spanning tree is any subset of a connected graph
that contains all firms and is a tree. Kirchhoff's matrix tree theorem
states that any cofactor of the unweighted Laplacian equals the number
of spanning trees in a graph \citep{Kirchhoff1847Aufloesung,spielman2019spectral}.
An implication of this result is that $\boldsymbol{\boldsymbol{L}}_{\left(1\right)}$
is guaranteed to be invertible -- and hence, $\hat{\psi}_{\left(1\right)}$
computable -- whenever the graph is connected. 

The graph depicted in Figure \ref{fig:mobility=000020graph} contains
16 spanning trees. One can use the matrix tree theorem to verify this
claim: computing the absolute value of the determinant of the matrix
formed by deleting any row and column from the unweighted Laplacian
$\boldsymbol{L}^{u}$ will yield 16. However, all 16 of these trees
contain the edge $e_{12}$. Thus, we can only pack a single spanning
tree into this graph without having to share an edge.

The 3-ECC in Figure \ref{fig:mobility=000020graph} also has 16 spanning
trees but we can pack more than one edge-disjoint spanning tree into
this component. Two edge-disjoint trees that span the 3-ECC are depicted
by the blue and red lines. Clearly, we cannot pack a third spanning
tree into this component as the two depicted trees consume all of
the available edges. Evidently, it is not always possible to pack
$k$ edge-disjoint spanning trees into a $k$-ECC. We will return
to this insight in Section \ref{sec:Tree-packing}, where tree packing
is discussed in greater detail.

\section{Trees and branches\protect\label{subsec:Trees-and-branches}}

Returning to the algebraic representation of the fixed effects estimator
in \eqref{eq:undirected}, an important simplification arises when
$\bar{\boldsymbol{B}}$ has dimension $J\times\left(J-1\right)$,
which implies there are only $J-1$ edges connecting the $J$ firms.
An incidence matrix of this form represents a spanning tree. Hence,
$\bar{\boldsymbol{B}}_{\left(1\right)}$ is a square invertible matrix
and $\boldsymbol{L}_{\left(1\right)}^{-1}=\left(\bar{\boldsymbol{B}}_{\left(1\right)}^{\prime}\right)^{-1}\bar{\boldsymbol{W}}^{-1}\bar{\boldsymbol{B}}_{\left(1\right)}^{-1}$.
Plugging this expression into \eqref{eq:undirected} and simplifying
reveals that the firm effects estimator reduces in this case to
\[
\hat{\psi}_{\left(1\right)}=\left(\bar{\boldsymbol{B}}_{\left(1\right)}^{\prime}\right)^{-1}\overrightarrow{\Delta}.
\]
Note that, relative to \eqref{eq:undirected}, the weighting matrix
$\bar{\boldsymbol{W}}$ has disappeared, which reflects that the firm
effects are just-identified in this scenario. This interpretation
is pursued at greater length in \citet{kline2024firm}, where it is
shown that the fundamental restriction of the AKM model is that cycle
sums of the average wage changes $\overrightarrow{\Delta}$ must equal
zero. 

In general, when the mobility network is connected, $\bar{\boldsymbol{B}}$
can be partitioned into submatrices capturing spanning trees and a
set of leftover edges that fail to connect some of the firms. In particular,
if $M$ spanning trees are capable of being packed into the mobility
graph, with corresponding incidence matrices $\boldsymbol{T}_{1},\dots,\boldsymbol{T}_{M}$,
then we can write $\bar{\boldsymbol{B}}=\left[\boldsymbol{T}_{1},\dots,\boldsymbol{T}_{M-1},\boldsymbol{T}_{M}^{+}\right]$,
where $\boldsymbol{T}_{M}^{+}$ appends columns to $\boldsymbol{T}_{M}$
that capture any leftover edges. This set of incidence matrices $\left\{ \boldsymbol{T}_{1},\dots,\boldsymbol{T}_{M-1},\boldsymbol{T}_{M}^{+}\right\} $
define our branches: each branch is a subgraph of the data that connects
all firms in the mobility network.

Removing the first row of $\bar{\boldsymbol{B}}$ yields $\bar{\boldsymbol{B}}_{\left(1\right)}=\left[\boldsymbol{T}_{\left(1\right),1},\dots,\boldsymbol{T}_{\left(1\right),M-1},\boldsymbol{T}_{\left(1\right),M}^{+}\right]$.
It is useful to also partition $\overrightarrow{\Delta}=\left(\overrightarrow{\Delta}_{1}^{\prime},\dots,\overrightarrow{\Delta}_{M}^{\prime}\right)^{\prime}$
and $\bar{\mathbf{W}}=diag\left(\boldsymbol{\bar{W}}_{1},\dots,\bar{\boldsymbol{W}}_{M}\right)$.
We can now define branch-specific estimates
\[
\hat{\psi}_{\left(1\right),b}=\begin{cases}
\left(\boldsymbol{T}_{\left(1\right),b}^{\prime}\right)^{-1}\overrightarrow{\Delta}_{b} & \text{if \ensuremath{b<M}}\\
\left(\boldsymbol{T}_{\left(1\right),M}^{+}\bar{\boldsymbol{W}}_{M}\left(\boldsymbol{T}_{\left(1\right),M}^{+}\right)^{\prime}\right)^{-1}\boldsymbol{T}_{\left(1\right),M}^{+}\bar{\boldsymbol{W}}_{M}\overrightarrow{\Delta}_{M} & \text{if \ensuremath{b=M}}
\end{cases}.
\]
This representation reflects the earlier finding that the weights
are irrelevant for trees. However, the weights may matter for the
last branch $\left(b=M\right)$ because the graph may contain edges
that are not a part of any spanning tree. Including these leftover
edges forms cycles -- disjoint paths between vertices -- on the
subgraph corresponding to the last branch. The inclusion of these
cycles will tend to improve the precision of the last branch relative
to the others. 

With all the edges assigned to a branch, no information in the microdata
is wasted. The following proposition formally establishes that the
full-sample OLS estimator $\hat{\psi}_{\left(1\right)}$ can be written
as a linear combination of the branch-specific estimates $\left\{ \hat{\psi}_{\left(1\right),b}\right\} _{b=1}^{M}$. 
\begin{prop}
\label{Prop:=000020lincom}Suppose the mobility network is connected
and contains $M$ edge-disjoint spanning trees. Then $\hat{\psi}_{\left(1\right)}=\sum_{b=1}^{M}\boldsymbol{C}_{\left(1\right),b}\hat{\psi}_{\left(1\right),b}$,
where each $\boldsymbol{C}_{\left(1\right),b}$ is a $\left(J-1\right)\times\left(J-1\right)$
matrix and $\sum_{b=1}^{M}\boldsymbol{C}_{\left(1\right),b}=\boldsymbol{I}$.
\end{prop}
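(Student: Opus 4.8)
The plan is to expand the closed-form estimator in \eqref{eq:undirected} using the column partition $\bar{\boldsymbol{B}}=\left[\boldsymbol{T}_1,\dots,\boldsymbol{T}_{M-1},\boldsymbol{T}_M^{+}\right]$ induced by the packing, recognize each block's contribution as a branch-specific reduced Laplacian applied to the corresponding branch estimate, and then read off $\boldsymbol{C}_{\left(1\right),b}$ as $\boldsymbol{L}_{\left(1\right)}^{-1}$ times that branch Laplacian. Because the trees are edge-disjoint, each column of $\bar{\boldsymbol{B}}$ (equivalently, each entry of $\overrightarrow{\Delta}$) belongs to exactly one block, so the partitions $\overrightarrow{\Delta}=\left(\overrightarrow{\Delta}_1',\dots,\overrightarrow{\Delta}_M'\right)'$ and $\bar{\boldsymbol{W}}=diag\left(\bar{\boldsymbol{W}}_1,\dots,\bar{\boldsymbol{W}}_M\right)$ conform and \eqref{eq:undirected} becomes $\hat{\psi}_{\left(1\right)}=\boldsymbol{L}_{\left(1\right)}^{-1}\left(\sum_{b<M}\boldsymbol{T}_{\left(1\right),b}\bar{\boldsymbol{W}}_b\overrightarrow{\Delta}_b+\boldsymbol{T}_{\left(1\right),M}^{+}\bar{\boldsymbol{W}}_M\overrightarrow{\Delta}_M\right)$. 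The same block bookkeeping applied to $\boldsymbol{L}_{\left(1\right)}=\bar{\boldsymbol{B}}_{\left(1\right)}\bar{\boldsymbol{W}}\bar{\boldsymbol{B}}_{\left(1\right)}'$ gives $\boldsymbol{L}_{\left(1\right)}=\sum_{b=1}^{M}\boldsymbol{L}_{\left(1\right),b}$, where $\boldsymbol{L}_{\left(1\right),b}\equiv\boldsymbol{T}_{\left(1\right),b}\bar{\boldsymbol{W}}_b\boldsymbol{T}_{\left(1\right),b}'$ for $b<M$ and $\boldsymbol{L}_{\left(1\right),M}\equiv\boldsymbol{T}_{\left(1\right),M}^{+}\bar{\boldsymbol{W}}_M\left(\boldsymbol{T}_{\left(1\right),M}^{+}\right)'$.

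Next I would substitute the definitions of the branch estimates. For $b<M$, $\boldsymbol{T}_{\left(1\right),b}$ is square and invertible (the reduced incidence matrix of a spanning tree, as noted at the start of Section \ref{subsec:Trees-and-branches}), so $\overrightarrow{\Delta}_b=\boldsymbol{T}_{\left(1\right),b}'\hat{\psi}_{\left(1\right),b}$ and therefore $\boldsymbol{T}_{\left(1\right),b}\bar{\boldsymbol{W}}_b\overrightarrow{\Delta}_b=\boldsymbol{L}_{\left(1\right),b}\hat{\psi}_{\left(1\right),b}$. For $b=M$, rearranging the definition of $\hat{\psi}_{\left(1\right),M}$ gives $\boldsymbol{T}_{\left(1\right),M}^{+}\bar{\boldsymbol{W}}_M\overrightarrow{\Delta}_M=\boldsymbol{L}_{\left(1\right),M}\hat{\psi}_{\left(1\right),M}$. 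Summing, $\hat{\psi}_{\left(1\right)}=\boldsymbol{L}_{\left(1\right)}^{-1}\sum_{b=1}^{M}\boldsymbol{L}_{\left(1\right),b}\hat{\psi}_{\left(1\right),b}$, so setting $\boldsymbol{C}_{\left(1\right),b}=\boldsymbol{L}_{\left(1\right)}^{-1}\boldsymbol{L}_{\left(1\right),b}$---a fixed matrix, since it depends only on the (fixed) $\boldsymbol{B}$, $\boldsymbol{W}$, and the chosen packing---delivers the decomposition, and $\sum_{b=1}^{M}\boldsymbol{C}_{\left(1\right),b}=\boldsymbol{L}_{\left(1\right)}^{-1}\sum_{b=1}^{M}\boldsymbol{L}_{\left(1\right),b}=\boldsymbol{L}_{\left(1\right)}^{-1}\boldsymbol{L}_{\left(1\right)}=\boldsymbol{I}$.

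The algebra is routine; the only real content is the invertibility bookkeeping, which is where I would be careful. I would invoke Kirchhoff's matrix tree theorem (cited earlier in the excerpt) three times: for $b<M$ it guarantees each $\boldsymbol{T}_{\left(1\right),b}$ is nonsingular; the subgraph carried by $\boldsymbol{T}_M^{+}$ contains the spanning tree $\boldsymbol{T}_M$ and is therefore connected, so its reduced weighted Laplacian $\boldsymbol{L}_{\left(1\right),M}$ is nonsingular and $\hat{\psi}_{\left(1\right),M}$ is well defined; and connectedness of the full mobility network makes $\boldsymbol{L}_{\left(1\right)}$ nonsingular so that $\hat{\psi}_{\left(1\right)}$ and the $\boldsymbol{C}_{\left(1\right),b}$ exist. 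I would also flag explicitly that edge-disjointness of the $M$ trees is exactly what licenses the conforming block partition---without it a column of $\bar{\boldsymbol{B}}$ could appear in two blocks and both $\hat{\psi}_{\left(1\right)}$ and $\boldsymbol{L}_{\left(1\right)}$ would be double-counted, breaking the identity $\sum_b\boldsymbol{C}_{\left(1\right),b}=\boldsymbol{I}$.
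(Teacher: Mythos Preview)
Your proposal is correct and follows essentially the same route as the paper: partition $\bar{\boldsymbol{B}}_{\left(1\right)}\bar{\boldsymbol{W}}\overrightarrow{\Delta}$ and $\boldsymbol{L}_{\left(1\right)}$ by branch, define $\boldsymbol{C}_{\left(1\right),b}=\boldsymbol{L}_{\left(1\right)}^{-1}\boldsymbol{L}_{\left(1\right),b}$ (the paper writes out the branch Laplacians explicitly rather than naming them), and verify both the decomposition and $\sum_b\boldsymbol{C}_{\left(1\right),b}=\boldsymbol{I}$. Your added bookkeeping on invertibility and the role of edge-disjointness is more explicit than the paper's terse proof but is otherwise the same argument.
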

\begin{proof}
We can write 
\[
\boldsymbol{\boldsymbol{L}}_{\left(1\right)}=\sum_{b=1}^{M-1}\boldsymbol{T}_{\left(1\right),b}\bar{\boldsymbol{W}}_{b}\boldsymbol{T}_{\left(1\right),b}^{\prime}+\boldsymbol{T}_{\left(1\right),M}^{+}\bar{\boldsymbol{W}}_{M}\left(\boldsymbol{T}_{\left(1\right),M}^{+}\right)^{\prime},
\]
\[
\bar{\boldsymbol{B}}_{\left(1\right)}\bar{\boldsymbol{W}}\overrightarrow{\Delta}=\sum_{b=1}^{M-1}\boldsymbol{T}_{\left(1\right),b}\bar{\boldsymbol{W}}_{b}\overrightarrow{\Delta}_{b}+\boldsymbol{T}_{\left(1\right),M}^{+}\bar{\boldsymbol{W}}_{M}\overrightarrow{\Delta}_{M}.
\]
Now define
\[
\boldsymbol{C}_{\left(1\right),b}=\begin{cases}
\boldsymbol{\boldsymbol{L}}_{\left(1\right)}^{-1}\boldsymbol{T}_{\left(1\right),b}\bar{\boldsymbol{W}}_{b}\boldsymbol{T}_{\left(1\right),b}^{\prime} & \text{if \ensuremath{b<M}}\\
\boldsymbol{\boldsymbol{L}}_{\left(1\right)}^{-1}\boldsymbol{T}_{\left(1\right),M}^{+}\bar{\boldsymbol{W}}_{M}\left(\boldsymbol{T}_{\left(1\right),M}^{+}\right)^{\prime} & \text{if \ensuremath{b=M}},
\end{cases}
\]
and note that 
\[
\sum_{b=1}^{M}\boldsymbol{C}_{\left(1\right),b}=\boldsymbol{\boldsymbol{L}}_{\left(1\right)}^{-1}\left(\sum_{b=1}^{M-1}\boldsymbol{T}_{\left(1\right),b}\bar{\boldsymbol{W}}_{b}\boldsymbol{T}_{\left(1\right),b}^{\prime}+\boldsymbol{T}_{\left(1\right),M}^{+}\bar{\boldsymbol{W}}_{M}\left(\boldsymbol{T}_{\left(1\right),M}^{+}\right)^{\prime}\right)=\boldsymbol{\boldsymbol{L}}_{\left(1\right)}^{-1}\boldsymbol{\boldsymbol{L}}_{\left(1\right)}=\boldsymbol{I}.
\]

For each $b<M$, 
\[
\boldsymbol{\boldsymbol{L}}_{\left(1\right)}^{-1}\boldsymbol{T}_{\left(1\right),b}\bar{\boldsymbol{W}}_{b}\overrightarrow{\Delta}_{b}=\left[\boldsymbol{\boldsymbol{L}}_{\left(1\right)}^{-1}\boldsymbol{T}_{\left(1\right),b}\bar{\boldsymbol{W}}_{b}\boldsymbol{T}_{\left(1\right),b}^{\prime}\right]\hat{\psi}_{\left(1\right),b}=\boldsymbol{C}_{\left(1\right),b}\hat{\psi}_{\left(1\right),b},
\]
while, for $b=M$, 
\[
\boldsymbol{\boldsymbol{L}}_{\left(1\right)}^{-1}\boldsymbol{T}_{\left(1\right),M}^{+}\bar{\boldsymbol{W}}_{M}\overrightarrow{\Delta}_{M}=\left[\boldsymbol{\boldsymbol{L}}_{\left(1\right)}^{-1}\boldsymbol{T}_{\left(1\right),M}^{+}\bar{\boldsymbol{W}}_{M}\left(\boldsymbol{T}_{\left(1\right),M}^{+}\right)^{\prime}\right]\hat{\psi}_{\left(1\right),M}=\boldsymbol{C}_{\left(1\right),M}\hat{\psi}_{\left(1\right),M}.
\]
Hence,
\[
\hat{\psi}_{\left(1\right)}=\boldsymbol{\boldsymbol{L}}_{\left(1\right)}^{-1}\bar{\boldsymbol{B}}_{\left(1\right)}\bar{\boldsymbol{W}}\overrightarrow{\Delta}=\sum_{b=1}^{M}\boldsymbol{C}_{\left(1\right),b}\hat{\psi}_{\left(1\right),b}.
\]
\end{proof}
The matrix $\boldsymbol{C}_{\left(1\right),b}$ can be shown to capture
the relative precision of the branch-specific estimates relative to
the full-sample estimate when the wage errors are homoscedastic \citep{jochmans2019fixed}.
Thus, $\hat{\psi}_{\left(1\right)}$ can be thought of as a precision-matrix-weighted
average of the branch estimates, albeit with the potential for negative
elementwise weights.

Now letting
\[
\hat{\psi}=\left(0,\hat{\psi}_{\left(1\right)}^{\prime}\right)^{\prime},\ \hat{\psi}_{b}=\left(0,\hat{\psi}_{\left(1\right)b}^{\prime}\right)^{\prime},\ \boldsymbol{C}_{b}=diag\left(0,\boldsymbol{C}_{\left(1\right),b}\right),\ \text{and }\hat{\phi}_{b}=\boldsymbol{C}_{b}\hat{\psi}_{b},
\]
Proposition \ref{Prop:=000020lincom} yields the additive decomposition
introduced in \eqref{eq:decomp}:
\[
\hat{\psi}=\sum_{b=1}^{M}\hat{\phi}_{b}.
\]
This representation reveals that each $\hat{\phi}_{b}$ measures the
influence of a branch on the full-sample estimator. In fact, one can
show that each $\hat{\phi}_{b}$ gives the branch sum of edge-level
contributions to the recentered influence function of the full-sample
estimator. Assumption \ref{Pair-independence} guarantees that the
$\left\{ \hat{\phi}_{b}\right\} _{b=1}^{M}$ are mutually independent,
providing a transparent foundation for assessing uncertainty in $\hat{\psi}$. 

Computation of $\hat{\phi}_{b}=\left(0,\hat{\phi}_{\left(1\right),b}^{\prime}\right)^{\prime}$
is greatly aided by the observation that
\[
\hat{\phi}_{\left(1\right),b}=\begin{cases}
\boldsymbol{\boldsymbol{L}}_{\left(1\right)}^{-1}\boldsymbol{T}_{\left(1\right),b}\bar{\boldsymbol{W}}_{b}\overrightarrow{\Delta}_{b} & \text{if \ensuremath{b<M}}\\
\boldsymbol{\boldsymbol{L}}_{\left(1\right)}^{-1}\boldsymbol{T}_{\left(1\right),M}^{+}\bar{\boldsymbol{W}}_{M}\overrightarrow{\Delta}_{M} & \text{if \ensuremath{b=M}.}
\end{cases}
\]
Each of these expressions is simply a least squares regression that
is no more computationally expensive to solve than the weighted least
squares problem in \eqref{eq:undirected}. Likewise, letting $\overrightarrow{\Delta}_{-b}=\boldsymbol{T}_{\left(1\right),b}^{\prime}\hat{\psi}_{\left(1\right),-b}$
for $b<M$ and $\overrightarrow{\Delta}_{-M}=\left(\boldsymbol{T}_{\left(1\right),M}^{+}\right)^{\prime}\hat{\psi}_{\left(1\right),-M}$,
the leave-out contributions can be written
\[
\hat{\phi}_{\left(1\right),-b}=\begin{cases}
\boldsymbol{\boldsymbol{L}}_{\left(1\right)}^{-1}\boldsymbol{T}_{\left(1\right),b}\bar{\boldsymbol{W}}_{b}\overrightarrow{\Delta}_{-b} & \text{if \ensuremath{b<M}}\\
\boldsymbol{\boldsymbol{L}}_{\left(1\right)}^{-1}\boldsymbol{T}_{\left(1\right),M}^{+}\bar{\boldsymbol{W}}_{M}\overrightarrow{\Delta}_{-M} & \text{if \ensuremath{b=M},}
\end{cases}
\]
which is again a least squares regression, where now the dependent
variable is the predicted oriented wage change based on the leave-branch-out
firm effect estimates.

\section{Building branches\protect\label{sec:Tree-packing}}

The key challenge in constructing branch-specific estimates is building
the incidence matrices $\boldsymbol{T}_{1},\dots,\boldsymbol{T}_{M}$
corresponding to the graph's spanning trees. Doing so requires first
determining how many spanning trees can be packed into the graph.
The following result due to \citet{nash1961edge} and \citet{tutte1961decomposing}
provides a useful foundation for answering this question.
\begin{thm*}[Nash-Williams--Tutte, 1961]
A graph $G=\left(V,E\right)$ can pack $M$ edge-disjoint spanning
trees if and only if, for every partition $\left\{ V_{1},\dots,V_{r}\right\} $
of $V$ into $r\geq2$ blocks, the number of edges connecting vertices
in different blocks is at least $M\left(r-1\right)$.
\end{thm*}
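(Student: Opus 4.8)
I would treat the two implications separately; the ``only if'' direction is short, and essentially all the content is in the ``if'' direction. For \emph{necessity}, suppose $T_1,\dots,T_M$ are edge-disjoint spanning trees and fix a partition $\{V_1,\dots,V_r\}$ of $V$ with $r\geq 2$. For each $b$, contract every block $V_\ell$ of $T_b$ to a single vertex; since $T_b$ is connected and spans $V$, the resulting multigraph is connected on $r$ vertices and hence has at least $r-1$ edges, each of which is an edge of $T_b$ with endpoints in two different blocks. Summing over $b$ and using edge-disjointness, the number of edges of $G$ joining distinct blocks is at least $M(r-1)$.

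For \emph{sufficiency}, assume every $r$-block partition is crossed by at least $M(r-1)$ edges; taking $r=2$ shows in particular that $G$ is connected, and writing $n=|V|$ observe that a family of $M$ edge-disjoint forests uses at most $M(n-1)$ edges, with equality exactly when each forest is a spanning tree. So it suffices to produce an edge-disjoint family $(F_1,\dots,F_M)$ of forests with $\sum_b|F_b|=M(n-1)$. I would take such a family maximizing $\sum_b|F_b|$ and suppose, for contradiction, that it falls short, so that after relabeling $F_M$ is disconnected. Since $G$ is connected there is an edge $e$ joining two components of $F_M$; if $e$ lies in no $F_b$ it can be added to $F_M$, contradicting maximality, so every such candidate edge already lies in another forest. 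One then runs a breadth-first ``exchange search'': starting from the candidate edges joining components of $F_M$, one repeatedly tries to free a candidate edge $e\in F_b$ by moving it into $F_M$ (merging two of its components) while re-routing $F_b$ through an edge on the fundamental cycle that $e$ closes in $F_b$, which spawns fresh candidates; the search records the growing set $U\subseteq V$ of vertices absorbed into the relevant $F_M$-component. It must terminate in one of two ways: either a chain of moves genuinely reduces the number of components of $F_M$ without harming any other forest, yielding a larger family and contradicting maximality; or the search saturates, and then the components that $F_1,\dots,F_M$ induce on the saturated region assemble into a partition of $V$ whose crossing edges number strictly fewer than $M$ times (parts $-1$), the deficit coming precisely from $F_M$ being disconnected there, contradicting the hypothesis. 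Either way we reach a contradiction, so the maximizing family consists of $M$ spanning trees.

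A cleaner but less self-contained route to sufficiency invokes the matroid union theorem for the graphic matroid $\mathcal{M}_G$: the $M$-fold union $\mathcal{M}_G\vee\cdots\vee\mathcal{M}_G$ has rank $\min_{F\subseteq E}\bigl(|E\setminus F|+M\,\rho(F)\bigr)$, where $\rho(F)=n-c(V,F)$ is the graphic rank and $c(V,F)$ the number of components of $(V,F)$. The graph has $M$ edge-disjoint spanning trees iff this union has $M$ disjoint bases, iff its rank equals $M(n-1)$ (the min is always at most this, via $F=E$), iff $|E\setminus F|\geq M\bigl(c(V,F)-1\bigr)$ for every $F\subseteq E$. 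To deduce the last condition from the partition hypothesis, fix $F$ and take the partition $P$ whose blocks are the vertex sets of the components of $(V,F)$: every edge of $F$ is internal to $P$, so $|E\setminus F|$ is at least the number of edges crossing $P$, which the hypothesis bounds below by $M\bigl(c(V,F)-1\bigr)$. The cost of this route is that the matroid union theorem must itself be proved or cited.

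The necessity direction and the matroid-to-partition translation are routine; the real obstacle is sufficiency. In the direct approach the delicate step is making the exchange search precise and proving its dichotomy --- above all, showing that upon saturation the components induced by the forests on the reached vertex set truly form a partition whose crossing-edge count contradicts the hypothesis, with the arithmetic of the deficit working out exactly. That bookkeeping is precisely what the matroid union theorem packages away, which is why the shortest honest proof either reproduces it or borrows it.
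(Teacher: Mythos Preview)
The paper does not prove this theorem: it is stated as a classical result, attributed to Nash-Williams and Tutte, and followed only by a one-paragraph intuitive gloss on the two directions. Your proposal therefore goes well beyond what the paper provides, and there is no paper-side argument to compare against.

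On the substance: your necessity argument is correct and standard. Your matroid-union route to sufficiency is also correct; one small caveat is that when $c(V,F)=1$ the induced ``partition'' has a single block and the hypothesis (which is stated for $r\geq 2$) does not directly apply, but then $M\bigl(c(V,F)-1\bigr)=0$ and the required inequality is vacuous, so no gap arises. Your direct exchange sketch has the right architecture but contains a garbled phrase: you speak of ``the fundamental cycle that $e$ closes in $F_b$'' when $e$ already belongs to $F_b$ and hence closes no cycle there. Presumably you mean that removing $e$ from $F_b$ splits a component of $F_b$ into two pieces, and the fresh candidates are the edges $e'\notin F_b$ whose fundamental cycle in $F_b$ contains $e$ (equivalently, edges reconnecting those two pieces). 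You rightly flag this step as the delicate one and offer the matroid-union route as the clean alternative; since the paper neither attempts nor requires a proof, that route already suffices.
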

Intuitively, a spanning tree must cross any partition of the vertices
to span the graph. The Nash-Williams--Tutte theorem establishes that
if there is a partition of the vertices that yields fewer than $M\left(r-1\right)$
crossings, then there must be fewer than $M$ spanning trees capable
of being packed into the graph. Conversely, if $M$ spanning trees
can be packed into the graph, then the number of edges crossing any
partition must be at least $M\left(r-1\right)$.

The packing number of a graph, $\tau\left(G\right)$, gives the maximum
number of edge-disjoint spanning trees in $G$. The following corollary
of the Nash-Williams--Tutte theorem provides us with bounds on $\tau\left(G\right)$
in terms of the graph's edge connectivity $\lambda\left(G\right)$.
\begin{cor*}
For a connected graph $G=\left(V,E\right)$, 
\[
\left\lfloor \lambda\left(G\right)/2\right\rfloor \leq\tau\left(G\right)\leq\lambda\left(G\right),
\]
where $\left\lfloor \cdot\right\rfloor $ denotes the floor operator.
\end{cor*}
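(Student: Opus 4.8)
The plan is to derive both inequalities from the Nash-Williams--Tutte theorem, treating the upper bound as essentially immediate and devoting the real work to the lower bound. For the upper bound $\tau(G)\le\lambda(G)$: take a minimum edge cut $S$ with $|S|=\lambda(G)$. Removing $S$ disconnects $G$ into (at least) two connected pieces; grouping these pieces gives a partition of $V$ into $r=2$ blocks with exactly $\lambda(G)$ crossing edges. By Nash-Williams--Tutte, if $G$ packs $M$ spanning trees then every partition into $r$ blocks has at least $M(r-1)$ crossing edges; applying this to our 2-block partition forces $M\le\lambda(G)$, hence $\tau(G)\le\lambda(G)$.

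For the lower bound $\tau(G)\ge\lfloor\lambda(G)/2\rfloor$, set $M=\lfloor\lambda(G)/2\rfloor$ and verify the Nash-Williams--Tutte condition: for every partition $\{V_1,\dots,V_r\}$ with $r\ge2$, the number of crossing edges is at least $M(r-1)$. Fix such a partition and let $d_i$ denote the number of edges with exactly one endpoint in $V_i$. Because $G$ is $\lambda(G)$-edge-connected, contracting each other block shows that each $V_i$ is separated from the rest of the graph only by cutting at least $\lambda(G)$ edges, so $d_i\ge\lambda(G)$ for every $i$ (assuming each $V_i$ is nonempty and $r\ge2$, so $V\setminus V_i\ne\emptyset$). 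Summing, $\sum_{i=1}^r d_i\ge r\lambda(G)$. Now each crossing edge has both its endpoints in two distinct blocks, so it is counted exactly twice in $\sum_i d_i$; therefore the number of crossing edges equals $\tfrac12\sum_i d_i\ge \tfrac12 r\lambda(G)$. It remains to check $\tfrac12 r\lambda(G)\ge M(r-1)$, i.e. $\tfrac12 r\lambda(G)\ge \lfloor\lambda(G)/2\rfloor(r-1)$. Since $\lfloor\lambda(G)/2\rfloor\le\lambda(G)/2$, it suffices that $\tfrac12 r\lambda(G)\ge \tfrac12\lambda(G)(r-1)$, which is immediate as $r\ge r-1$. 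Hence the Nash-Williams--Tutte condition holds for $M=\lfloor\lambda(G)/2\rfloor$, giving $\tau(G)\ge\lfloor\lambda(G)/2\rfloor$.

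The step I expect to require the most care is establishing $d_i\ge\lambda(G)$ for each block of an arbitrary partition. The clean way to see this is to observe that contracting all blocks $V_j$ with $j\ne i$ into a single vertex produces a multigraph in which $V_i$ (or rather, the vertex set it spans) must still be connected to the contracted vertex, and the edge connectivity cannot have increased under contraction in the sense relevant here; more directly, any edge subset whose removal isolates $V_i$ from $V\setminus V_i$ in $G$ is an edge cut of $G$, so it has size at least $\lambda(G)$, and the set of $d_i$ crossing edges of $V_i$ is exactly such a cut. One must be slightly careful that $V_i$ and its complement are both nonempty, which holds since the $V_i$ partition $V$ and $r\ge2$. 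With that observation in hand the rest is the elementary double-counting and the trivial inequality $\lfloor\lambda(G)/2\rfloor\le\lambda(G)/2$, so the argument is short.
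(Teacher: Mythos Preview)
Your proof is correct and follows essentially the same approach as the paper: the upper bound via a minimum cut yielding a two-block partition, and the lower bound via the double-counting argument that each block $V_i$ has at least $\lambda(G)$ edges to its complement so the total number of crossings is at least $r\lambda(G)/2\ge\lfloor\lambda(G)/2\rfloor(r-1)$. The paper's discussion is slightly terser but the logic is identical.
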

A proof of this corollary can be found in \citet{kundu1974bounds}.
The upper bound $\tau\left(G\right)\leq\lambda\left(G\right)$ follows
by taking $r=2$ and choosing a partition that yields exactly $\lambda\left(G\right)$
crossings: any spanning tree must include at least one such crossing
edge to connect the two blocks, and edge-disjoint spanning trees must
use distinct crossing edges. The lower bound $\tau(G)\ge\lfloor\lambda(G)/2\rfloor$
comes from a double-counting of edges across any $r$-way partition:
each block $V_{i}$ has at least $\lambda(G)$ edges with one endpoint
in $V_{i}$ and the other in $V\setminus V_{i}$; summing over $i=1,\dots,r$
counts each crossing edge twice, so the total number of crossings
is at least $r\lambda(G)/2$. In the case of Figure \ref{fig:mobility=000020graph},
this corollary tells us that $\tau\left(G\right)\in\left[1,3\right]$
in the graph's 3-ECC. As the Figure illustrates, there are exactly
two edge-disjoint spanning trees in that component.

In the prototypical worker--firm dataset, the edge connectivity is
zero: there are usually multiple connected components of firms. Since
the work of \citet{abowd2002computing}, the convention in the literature
has been to focus on the largest connected component of the mobility
graph (i.e., the largest $1$-ECC). Typically many firms in the largest
$1$-ECC are connected by only a single edge, in which case no more
than one edge-disjoint spanning tree can be packed into the graph.
The Nash-Williams--Tutte theorem suggests a reasonable way to find
multiple disjoint spanning trees is to narrow the scope of investigation
from the largest $1$-ECC to the largest $k$-ECC for $k>1$. Doing
so requires first finding the largest $k$-ECC and then packing this
subgraph. 

\subsection{Finding the largest $k$-ECC}

A key tool that facilitates finding a $k$-ECC is the Gomory-Hu tree
\citep{GomoryHu1961}, which is a weighted tree on $V$. The edge
weights of this tree provide the number of edge removals required
to disconnect a pair of vertices. Figure \ref{fig:Gomory-Hu} provides
the Gomory-Hu tree associated with the graph in Figure \ref{fig:mobility=000020graph}.
The edge connecting Firm 1 to Firm 2 has a weight of 1, reflecting
that Firm 1 can be disconnected from the network by removing $e_{12}$.
In contrast, disconnecting Firm 2 from Firm 3 would require removing
three edges $\left\{ e_{23},e_{24},e_{25}\right\} $, which is why
the Gomory-Hu edge between these vertices has a weight of 3.

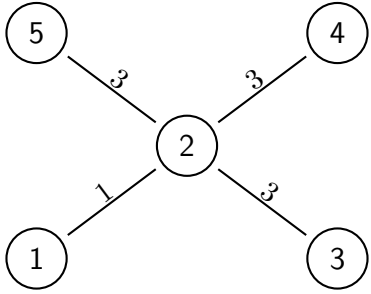
\begin{figure}[H]
\centering
\centering{}
\begin{tikzpicture}[scale=1.0,
    node_style/.style={circle, draw=black, thick, minimum size=8mm, font=\sffamily},
    gh_edge/.style={thick, black, shorten >=3pt, shorten <=3pt},
    weight_label/.style={font=\small, inner sep=1pt, sloped, above}
]

\node[node_style] (2) at (0,0)      {2};
\node[node_style] (1) at (-2,-1.5)  {1};
\node[node_style] (3) at (2,-1.5)   {3};
\node[node_style] (4) at (2,1.5)    {4};
\node[node_style] (5) at (-2,1.5)   {5};

\draw[gh_edge] (2) -- node[weight_label] {1} (1);
\draw[gh_edge] (2) -- node[weight_label] {3} (3);
\draw[gh_edge] (2) -- node[weight_label] {3} (4);
\draw[gh_edge] (2) -- node[weight_label] {3} (5);

\end{tikzpicture}
\caption{Gomory-Hu Tree of graph in Figure \ref{fig:mobility=000020graph}\protect\label{fig:Gomory-Hu}}
\end{figure}

A useful feature of Gomory-Hu trees is that the minimal edge weight
on the path between any pair of vertices $\left(s,v\right)\in V\times V$
gives the number of edge removals required to disconnect those vertices.
As a result, dropping all edges with weight less than $k$ reveals
the set of all $k$-ECCs. 

For example, if we threshold the tree in Figure \ref{fig:Gomory-Hu}
at $k=2$, we are left with a single 2-ECC formed by the vertex set
$\left\{ 2,3,4,5\right\} $. Several algorithms exist for rapidly
building the Gomory-Hu tree associated with a graph. I will use an
algorithm due to \citet{Gusfield1990}, implemented in the \texttt{igraph}
package \citep{csardi2023igraph}, which has worst-case computational
complexity of $O\left(J^{4}\right)$.\footnote{For very large graphs, building the full Gomory-Hu tree will be excessively
costly. In those settings, one can work with an approach due to \citet{nagamochi1992linear,nagamochi1992computing}
that avoids computing the entire tree and directly finds the $k$-ECC.} Once we have pruned the Gomory-Hu tree, the $k$-ECC with the most
vertices can be selected.

\subsection{Packing a $k$-ECC}

After the largest $k$-ECC has been found, we would like to extract
as many edge-disjoint spanning trees from the graph as possible. This
is known as the \textit{tree-packing problem}. Formally, the tree
packing problem is an integer linear programming (ILP) problem that
can be expressed as follows:
\[
\max_{x}\ \sum_{T\in\mathcal{T}}x_{T}\quad\text{s.t.}\quad\sum_{T\ni e}x_{T}\le1,\ \forall e\in E,\qquad x_{T}\in\{0,1\},\ \forall T\in\mathcal{T},
\]
where $\mathcal{T}$ is the collection of all spanning trees in the
graph and $x_{T}$ is an indicator for whether tree $T\in\mathcal{T}$
is chosen. The matrix tree theorem tells us that $\left|\mathcal{T}\right|$
can grow exponentially with the scale of the graph, making direct
enumeration infeasible. 

Rather than tackle the ILP directly, I rely on an iterative approach
proposed by \citet{roskind_tarjan_1985}. Their algorithm requires
pre-specifying the number $M$ of edge-disjoint spanning trees that
one is seeking to extract from a $k$-ECC. When $M\leq\tau\left(G\right)$,
the procedure is guaranteed to find exactly $M$ trees. In contrast,
greedily extracting trees sequentially (e.g., via Kruskal's algorithm)
may find fewer than $\tau\left(G\right)$ trees. An example is provided
in \ref{sec:Appendix:-An-example}.

The unweighted Roskind-Tarjan algorithm, which is implemented in SageMath,
is well suited to large graphs, exhibiting worst-case computational
complexity $O\left(J^{2}M^{2}\right)$. If we apply the algorithm
to a $k$-ECC, then the Corollary assures us that we will be able
to find at least $\left\lfloor k/2\right\rfloor $ trees. 

\subsection{A Prune-and-Pack algorithm}

Algorithm \ref{alg:Pack=000020k-ECC} provides a sketch of the composite
procedure for pruning to the $k$-ECC and extracting the edge-disjoint
spanning trees.

\begin{algorithm}[h]
\caption{Pack $k$-ECC\protect\label{alg:Pack=000020k-ECC}}

Fix a $k\in\mathbb{N}$

Drop all vertices with degree \textless$k$. Drop all but largest
connected component. Repeat until no more vertices dropped.

Build a Gomory-Hu tree from the $k$-core in the previous step. Remove
all tree edges with weight $<k$, then take the largest connected
component.

Initialize $M=\left\lfloor k/2\right\rfloor $. While Roskind-Tarjan
returns $M$ disjoint spanning trees, increment $M$, stopping when
no more spanning trees can be found or when $M=k$.
\end{algorithm}
Step 2 of this algorithm is a screen that exploits the fact that a
vertex with degree $<k$ cannot be a member of the $k$-ECC. The maximal
subgraph of a network in which all vertices have degree $\geq k$
is known as the $k$-core, which need not be connected. Step 3 searches
within the $k$-core for the largest $k$-ECC. In step 4, we first
try $M=\left\lfloor k/2\right\rfloor $, and then explore higher values
of $M$ until no more spanning trees can be found. The algorithm assumes
the $k$-ECC is not also a $\left(k+1\right)$-ECC, which implies
$\tau\left(G\right)\leq k$. Hence, the largest possible value of
$M$ explored in Step 4 is $k$. 

\subsection{Re-packing for reproducibility}

While Algorithm \ref{alg:Pack=000020k-ECC} is guaranteed to deliver
as many trees as can be packed into the $k$-ECC of interest, the
packing it produces is not unique: the Roskind-Tarjan step relies
on the (ultimately arbitrary) labeling of the vertex identifiers.
Rather than rely on an arbitrary packing, one can consider several
random packings and average downstream branch-based estimates over
them. Below, I explore alternate packings by randomly reshuffling
the vertex ids of the graph representation and reapplying the Roskind-Tarjan
algorithm 100 times. Note that, in contrast to random sample-splitting
of individual observations, re-packing does not alter the estimand.

While averaging over random packings aids reproducibility \citep{ritzwoller2023reproducible},
it also adds to the burden of releasing estimates to the public. For
outside researchers to take full advantage of $P$ packings, $P$
copies of the $\left\{ \hat{\psi}_{b},\hat{\phi}_{b},\hat{\phi}_{-b}\right\} _{b=1}^{M}$
must be hosted, each derived from a different vertex ordering. Thus,
considering 10 alternate packings increases storage requirements by
an order of magnitude. We will see below that some statistics exhibit
little variability across alternate packings, suggesting that very
few packings may suffice for some purposes.

\section{Empirical application\protect\label{sec:Application}}

I now turn to analyzing an extract of the benchmark Veneto dataset
that was also studied in \citet{kline2020leave} and \citet{kline2024firm}.
The extract consists of 1,859,459 person-year observations from the
years 1999 and 2001. The largest connected component contains 73,933
firms and 747,205 workers, 197,572 of whom switch employers between
the two years. The worker moves between these two years involve 150,417
ordered origin-destination pairs $\left|\mathcal{P}\right|$. Exactly
1,500 pairs of firms have flows in both directions. Hence, the mobility
network formed by these pairs contains $\left|E\right|=$150,417-1,500=148,917
undirected edges.

Note that the average number of movers per edge is quite low as 197,572/148,917$\approx$1.3.
Thus, branching firm effect estimates based upon undirected edges
is likely to provide a reasonable approximation to what is possible
when treating each individual worker move as an edge. This is fortunate,
as packing spanning trees into multi-edge graphs turns out to be significantly
more complex than packing them into simple graphs \citep{barahona1995packing}. 

Table \ref{tab:Network-properties} reports the results of pruning
the mobility graph to the largest $k$-ECC for different choices of
$k$ as in Algorithm \ref{alg:Pack=000020k-ECC}. 

\begin{table}[H]
\begin{centering}
\begin{tabular}{|c|c|c|c|c|c|c|}
\hline 
$k$ & 1 & 2 & 3 & 4 & 5 & 6\tabularnewline
\hline 
\hline 
Firms in $k$-core & 73,933 & 41,093 & 21,570 & 11,145 & 5,682 & 3,128\tabularnewline
\hline 
Firms in $k$-ECC ($J$) & 73,933 & 41,054 & 21,565 & 11,145 & 5,682 & 3,128\tabularnewline
\hline 
Edges in $k$-ECC ($\left|E\right|$) & 148,917 & 116,026 & 80,561 & 51,824 & 31,677 & 19,796\tabularnewline
\hline 
Movers in $k$-ECC ($N$) & 197,572 & 158,149 & 114,717 & 78,908 & 53,131 & 36,903\tabularnewline
\hline 
Spanning Trees ($M$) & 1 & 1 & 2 & 3 & 3 & 4\tabularnewline
\hline 
\end{tabular}
\par\end{centering}
\caption{Network properties of $k$-ECCs \protect\label{tab:Network-properties}}
\end{table}
The size of the $k$-core contracts rapidly with $k$, reflecting
that most firms have low degree. To some extent this phenomenon is
an artifact of studying mobility over only a 2-year horizon. However,
adding further years of mobility data may introduce new firms that
only contain a single edge, which could actually lower the average
degree of the network. An interesting approach for future work would
be to fix a set of firms of interest in a base year and fill in the
edges between them with additional years of mobility data. 

The sparse nature of the Veneto network makes the $k$-core a very
close approximation to the largest $k$-ECC. While this finding need
not generalize to larger, or more densely connected, networks, these
patterns suggest the $k$-core is likely to provide a useful starting
guess for the $k$-ECC in larger graphs where full computation of
a Gomory-Hu tree would be impractical. In truly massive graphs, it
may be advisable to simply pack spanning trees into the $k$-core
via the Roskind-Tarjan algorithm rather than refining to the largest
$k$-ECC.

In the Veneto data, pruning the sample to the $2$-ECC leads to the
loss of about 44\% of the firms but fails to yield a second spanning
tree. The Nash-Williams-Tutte theorem suggests we should not be surprised
by this finding as a graph with $\lambda\left(G\right)=2$ can have
at most 2 spanning trees. The $3$-ECC, which retains about 29\% of
the firms but 58\% of the movers in the original graph, does contain
a second spanning tree. A third tree is found in the $4$-ECC, which
has 15\% of the firms and 39\% of the movers in the original graph.
A fourth tree emerges with the $6$-ECC, which contains only 4\% of
firms but 17\% of the movers in the original graph. 

It is interesting to compare these sample dimensions to what one would
obtain by randomly splitting the sample at the mover level. To build
this benchmark, I randomly assign each worker to one of $M$ splits
with equal probability, varying $M$ from one to four. In each split,
I calculate the largest connected component and then intersect the
vertex set of those components across splits to arrive at the number
of firms for which $M$ independent unbiased estimates can be computed.
Table \ref{tab:Random-splits} reports the results of repeating this
process 500 times. 

\begin{table}[H]
\begin{centering}
\begin{tabular}{|c|c|c|c|c|}
\hline 
Splits ($M$) & 1 & 2 & 3 & 4\tabularnewline
\hline 
\hline 
\textbf{Number of firms} &  &  &  & \tabularnewline
\hline 
25th Percentile & 73,933 & 28,680 & 13,034 & 6,537\tabularnewline
\hline 
Median & 73,933 & 28,737 & 13,077 & 6,576\tabularnewline
\hline 
75th Percentile & 73,933 & 28,797 & 13,127 & 6,607\tabularnewline
\hline 
\textbf{Overlap across simulations} &  &  &  & \tabularnewline
\hline 
25th Percentile & 73,933 & 22,745 & 9,328 & 4,482\tabularnewline
\hline 
Median & 73,933 & 22,804 & 9,366 & 4,509\tabularnewline
\hline 
75th Percentile & 73,933 & 22,861 & 9,406 & 4,536\tabularnewline
\hline 
\end{tabular}
\par\end{centering}
\caption{Firm effects estimable in each of $M$ random splits (500 simulations)\protect\label{tab:Random-splits}}
\end{table}

Splitting the sample in half yields a pair of independent estimates
for roughly one third of the firms. A three-way split yields three
estimates for roughly 18\% of firms. A four-way split yields four
estimates for about 9\% of firms. While little variability emerges
across the 500 simulation draws in the number of firms for which $M$
estimates can be computed, the composition of these sets of firms
varies considerably. The bottom panel of the table reports quantiles
of the degree of overlap in the sets of firms for which $M$ estimates
can be computed across all $\binom{500}{2}$ pairs of simulations.
The table reveals that reshuffling a split into $M=2$ groups would
yield less than 22,804 firms in common with the previous split half
of the time. Likewise, re-randomizing a three-way split would yield
less than 9,366 of the firms present in a previous random split half
of the time.

Comparing these findings to Table \ref{tab:Network-properties} reveals
that randomly splitting workers into $M$ groups yields more firms
with $M$ estimates than does pruning to the largest $k$-ECC with
$M$ trees. For example, the largest $3$-ECC has 21,565 firms, while
randomly assigning workers to one of two splits yields a median of
28,737 firms with two independent estimates. This discrepancy suggests
that there exist subgraphs of the largest $2$-ECC that strictly nest
the largest $3$-ECC yet contain two edge-disjoint spanning trees.
How to systematically explore the space of such intermediate subgraphs
is an interesting question. For example, one could potentially append
pruned edges to the branches of the $3$-ECC, growing the size of
the network in a way that ensures each branch connects the same set
of vertices. I leave such investigations to future research. 

While random splitting delivers independent estimates for more firms,
a major advantage of the pruning and packing strategy is that the
target population is deterministic. Another advantage is that the
full sample estimator decomposes linearly into $M$ branch-specific
estimates, which facilitates uncertainty quantification. In contrast,
the random splitting approach yields a full-sample estimator that
includes a fixed effect for every firm in the \textit{union} of the
$M$ connected sets. While some of these parameters have $M$ independent
estimates, others may have only a single estimate, which hinders uncertainty
quantification even when ignoring the randomness in the split itself. 

The next subsections illustrate how branches constructed from packing
$k$-ECC's can be put to work. In what follows, I compute the full-sample
estimator $\hat{\psi}$ for each $k$-ECC using the adjusted wage
changes $\hat{\Delta}$ between origin-destination pairs described
in \citet{kline2024firm}, which only adjust for a year fixed effect.
Note that each of the 197,572 worker moves contribute to at most one
of these wage changes. To facilitate computation, the vector $\hat{\Delta}$
is collapsed to a shorter vector of oriented wage changes $\overrightarrow{\Delta}$,
which Assumption \ref{Pair-independence} implies are mutually independent.
The $\overrightarrow{\Delta}$ are then used to construct the branch-specific
estimates $\hat{\psi}_{b}$ and $\hat{\phi}_{b}$. While the methods
described in the next subsection only apply to branches, the approaches
in Sections \ref{subsec:Moment-estimation} and \ref{subsec:Shrinkage}
can just as easily be applied to estimates constructed from random
splits.

\subsection{Quantifying uncertainty}

The arguments of Section \ref{sec:Basic-idea} suggest estimating
the variance matrix $\Sigma$ of the vector $\hat{\psi}$ of full-sample
firm effect estimates with $\hat{\Sigma}.$ To illustrate the potential
usefulness of this matrix, I consider the projection of $\psi$ onto
the matrix $\boldsymbol{X}=\left[\boldsymbol{1},\ln f\right]$, which
consists of an intercept and the logarithm of average firm size across
the two time periods. Letting $\iota=\left[0,1\right]'$, the quantity
$\gamma=\iota'\boldsymbol{S}_{xx}^{-1}\boldsymbol{X}'\psi$ approximates
the elasticity of firm wage effects with respect to firm size. 

As documented in \citet{bloom2018disappearing} and \citet{kline2024firm}
firm wage effects do not seem to be linear in log firm size, instead
exhibiting a concave relationship in which the largest firms exhibit
wages effects roughly equivalent to or lower than their slightly smaller
peers. Consequently, this linear projection will not coincide exactly
with the conditional expectation function. Nonetheless, the plug-in
estimator $\hat{\gamma}=\iota'\boldsymbol{S}_{xx}^{-1}\boldsymbol{X}'\hat{\psi}$
remains unbiased for the projection coefficient $\gamma$ conditional
on the covariates $\boldsymbol{X}$, which we treat as fixed. 

A branch-based estimator of the variance of the projection coefficient
is 
\[
\hat{\mathbb{V}}_{branch}\left[\hat{\gamma}\right]=\max\left\{ 0,\iota'\boldsymbol{S}_{xx}^{-1}\boldsymbol{X}'\hat{\Sigma}\boldsymbol{X}\boldsymbol{S}_{xx}^{-1}\iota\right\} ,
\]
where the $\max$ operator accounts for the fact that $\hat{\Sigma}$
need not be positive semi-definite. It is instructive to compare this
estimator to the naive variance estimate that comes from treating
the entries of $\hat{\psi}$ as mutually independent in a second-step
regression:
\[
\hat{\mathbb{V}}_{HC0}\left[\hat{\gamma}\right]=\iota'\boldsymbol{S}_{xx}^{-1}\boldsymbol{X}'diag\left\{ \left(\hat{\psi}-\boldsymbol{X}\boldsymbol{S}_{xx}^{-1}\boldsymbol{X}'\hat{\psi}\right)^{\circ2}\right\} \boldsymbol{X}\boldsymbol{S}_{xx}^{-1}\iota.
\]
While $\hat{\mathbb{V}}_{HC0}\left[\hat{\gamma}\right]$ is robust
to misspecification of the conditional expectation function, it neglects
dependence between the estimated firm effects, which can lead to large
biases in either direction.

Table \eqref{tab:Elasticity-of-firm} reports the estimated projection
coefficients, the two sets of standard errors, and the mean firm size
in each $k$-ECC. As $k$ grows larger, the sample is restricted to
larger firms, which tend to have greater degree in the mobility network.
The estimated projection coefficient $\hat{\gamma}$ declines substantially
with firm size, reflecting that the relationship between firm wage
effects and log firm size is concave. 

\begin{table}[H]
\begin{centering}
\begin{tabular}{|l|c|c|c|c|c|c|}
\hline 
\qquad{}\qquad{}\qquad{}\qquad{}$k$ & 1 & 2 & 3 & 4 & 5 & 6\tabularnewline
\hline 
\hline 
Full-sample estimate ($\hat{\gamma}$) & 0.0446 & 0.0329 & 0.0235 & 0.0199 & 0.0191 & 0.0209\tabularnewline
\hline 
Standard error ($\hat{\mathbb{V}}_{branch}\left[\hat{\gamma}\right]^{1/2}$) &  &  &  &  &  & \tabularnewline
\hline 
\qquad{}Mean & - & - & 0.0014 & 0.0004 & 0.0047 & 0.0001\tabularnewline
\hline 
\qquad{}Std dev & - & - & \{0.0013\} & \{0.0007\} & \{0.0004\} & \{0.0006\}\tabularnewline
\hline 
Naive std err ($\hat{\mathbb{V}}_{HC0}\left[\hat{\gamma}\right]^{1/2}$) & 0.0009 & 0.0010 & 0.0013 & 0.0018 & 0.0023 & 0.0032\tabularnewline
\hline 
Mean firm size & 13.6 & 21.6 & 34.6 & 55.1 & 88.0 & 131.6\tabularnewline
\hline 
\end{tabular}
\par\end{centering}
\caption{Elasticity of firm wage effects with respect to firm size\protect\label{tab:Elasticity-of-firm}}
{\footnotesize\bigskip{}
}{\footnotesize\par}

{\footnotesize Notes: the numbers reported in the row labeled ``Mean''
are square roots of the averages of $\hat{\mathbb{V}}_{branch}\left[\hat{\gamma}\right]$
over 100 randomly drawn packings. The numbers in curly braces give
the standard deviation of the standard error across these packings.
This calculation relies on the delta method: it is the standard deviation
across packings of $\hat{\mathbb{V}}_{branch}\left[\hat{\gamma}\right]$
scaled by twice the square root of its mean across packings. Dividing
this number by 10 gives the standard error associated with repacking
variability.}{\footnotesize\par}
\end{table}

The reported value of the branch-based standard error is constructed
by first averaging $\hat{\mathbb{V}}_{branch}\left[\hat{\gamma}\right]$
over $P=100$ alternate packings and then taking the square root.
The standard deviation of this standard error across alternate packings
is reported in curly braces. In most cases, this standard deviation
turns out to be on the order of $10^{-4}$. Hence, two researchers,
each relying on a single randomly selected packing, would be expected
to arrive at branch-based standard errors within $10^{-4}$ of each
other. Averaging across 100 packings reduces the packing noise level
to the order of $10^{-5}$. 

The greatest packing uncertainty arises in the 3-ECC, where the square
root of the average of $\hat{\mathbb{V}}_{branch}\left[\hat{\gamma}\right]$
over 100 packings is 0.0014. The packing standard error of this standard
error estimate is $0.0013/10$. Hence, an approximate 95\% confidence
interval for the standard error that would be achieved by averaging
over the universe of possible packings is $\left[0.0011,0.0017\right]$.
Note that this interval includes the naive standard error of 0.0013.
In practice, both standard errors are extremely small, implying $z$-scores
in the range 17-18. 

For $k>3$, the packing uncertainty is negligible after averaging
across the 100 packings, indicating that the branch-based standard
errors that would result from averaging over the universe of packings
likely differ from the naive standard errors. Substantively, however,
these differences turn out to be small: the branch-based standard
errors, like their naive counterparts, all indicate that the firm-size
wage elasticity is precisely estimated.

While the order of magnitude jumps across $k$-ECCs in the estimated
branch-based standard errors are plausibly attributable to chance
(i.e., to the errors $\boldsymbol{u}$), these patterns could also
reflect misspecification of the AKM model. As shown in \citet{kline2024firm},
the wage changes $\hat{\Delta}$ found in the Veneto data do not perfectly
obey the restrictions of the model. Suppose that each branch estimator
$\hat{\psi}_{b}$ is unbiased for a different target parameter $\psi_{b}$
and $\mathbb{E}\left[\hat{\psi}_{-b}\right]=\psi_{-b}\neq\psi_{b}$.
In this case,
\[
\mathbb{E}\left[\left(\hat{\phi}_{b}-\hat{\phi}_{-b}\right)\hat{\phi}_{b}'\right]=\boldsymbol{C}_{b}\left(\psi_{b}-\psi_{-b}\right)\phi_{b}'+\mathbb{V}\left[\hat{\phi}_{b}\right].
\]
It follows that the variance estimator is biased:
\[
\mathbb{E}\left[\hat{\Sigma}\right]=\mathbb{V}\left[\hat{\psi}\right]+\underbrace{\frac{1}{2}\sum_{b=1}^{M}\left[\boldsymbol{C}_{b}\left(\psi_{b}-\psi_{-b}\right)\phi_{b}'+\phi_{b}\left(\psi_{b}-\psi_{-b}\right)'\boldsymbol{C}_{b}'\right]}_{\text{estimand instability}}.
\]
While the first term captures the sampling variability of $\hat{\psi}$,
the second term measures estimand instability across branches. Unfortunately,
this term cannot be signed. I leave further study of the properties
of branch-based estimators under misspecification to future work.

\subsection{Moment estimation\protect\label{subsec:Moment-estimation}}

Moments of firm effects are common summary statistics that have long
played an important role in the literature. Since the firm effects
$\psi$ are only identified up to a location shift, it is traditional
to report centered moments as in \citet{abowd1999high}. Researchers
typically weight these central moments by activity measures such as
firm size.

Let $\omega\in\mathbb{R}_{\geq0}^{J}$ be a vector of firm weights
that sums to one $\left(\boldsymbol{1}'\omega=1\right)$. For any
$\ell\geq2$, we can write the plug-in estimator of the $\ell$th
central weighted moment of firm effects as
\[
\hat{\mu}_{\ell,PI}=\omega'\left(\hat{\psi}-\omega'\hat{\psi}\boldsymbol{1}\right)^{\circ\ell}.
\]
Generalizing the approach suggested in Section \ref{sec:Basic-idea},
I use branches to construct the corresponding unbiased estimator:
\[
\hat{\mu}_{\ell}=\binom{M}{\ell}^{-1}\sum_{(b_{1},\dots,b_{\ell})\in\mathcal{B}}\omega'\left(\hat{\psi}_{b_{1}}-\omega'\hat{\psi}_{b_{1}}\boldsymbol{1}\right)\odot\dots\odot\left(\hat{\psi}_{b_{\ell}}-\omega'\hat{\psi}_{b_{\ell}}\boldsymbol{1}\right),
\]
where $\mathcal{B}=\bigl\{(b_{1},\dots,b_{\ell}):1\le b_{1}<\cdots<b_{\ell}\le M\bigr\}$
is the set of all combinations of $\ell$ distinct branches. Note
that, when $M=2$, $\hat{\mu}_{2}$ is the covariance estimator used
in split sample approaches. When $M>2$, $\hat{\mu}_{2}$ averages
across covariance estimates formed by all possible sample splits.
When $\ell>2$, higher-order products are taken.

Table \ref{tab:Centered=000020Moments} reports plug-in and branch-based
estimates of moments of the firm effects weighting by total firm size
across the two periods. The branch-based estimates are again averaged
over 100 packings, with across-packing standard deviations reported
in curly braces. The estimated second moments exhibit little variability
across packings. In contrast, branch-based estimates of third and
fourth moments are quite variable, indicating that averaging across
multiple packings is essential for reproducibility. Fortunately, averaging
across 100 packings seems sufficient to minimize the influence of
the chosen packings on most of the estimates. For example, in the
6-ECC, a 95\% confidence interval for the fourth moment estimate that
would be recovered by averaging over an infinite number of packings
is $0.0036\pm1.96\left(0.0025\right)/10=\left[0.0031,0.0041\right]$.

\begin{table}[H]
\begin{centering}
\begin{tabular}{|l|c|c|c|c|c|c|}
\hline 
\qquad{}\qquad{}$k$ & 1 & 2 & 3 & 4 & 5 & 6\tabularnewline
\hline 
\hline 
Second moment ($\mu_{2}$) &  &  &  &  &  & \tabularnewline
\hline 
\qquad{}Plug-in ($\hat{\mu}_{2,PI}$) & 0.0490 & 0.0381 & 0.0339 & 0.0322 & 0.0307 & 0.0308 \tabularnewline
\hline 
\qquad{}Branch estimate ($\hat{\mu}_{2}$) &  &  &  &  &  & \tabularnewline
\hline 
\qquad{}\qquad{}Mean & - & - & 0.0276 & 0.0244 & 0.0306 & 0.0273\tabularnewline
\hline 
\qquad{}\qquad{}Std dev & - & - & \{0.0022\} & \{0.0052\} & \{0.0077\} & \{0.0057\}\tabularnewline
\hline 
\qquad{}\qquad{}Standard error & - & - & - & - & - & (0.0022)\tabularnewline
\hline 
Third moment ($\mu_{3}$) &  &  &  &  &  & \tabularnewline
\hline 
\qquad{}Plug-in $\left(\hat{\mu}_{3,PI}\right)$ & -0.0090 & -0.0055 & -0.0051 & -0.0049 & -0.0039 & -0.0038\tabularnewline
\hline 
\qquad{}Branch estimate $\left(\hat{\mu}_{3}\right)$ &  &  &  &  &  & \tabularnewline
\hline 
\qquad{}\qquad{}Mean & - & - & - & -0.0085 & -0.0115 & -0.0016\tabularnewline
\hline 
\qquad{}\qquad{}Std dev & - & - & - & \{0.0029\} & \{0.0037\} & \{0.0039\}\tabularnewline
\hline 
Fourth moment ($\mu_{4}$) &  &  &  &  &  & \tabularnewline
\hline 
\qquad{}Plug-in $\left(\hat{\mu}_{4,PI}\right)$ & 0.0176 & 0.0084 & 0.0070 & 0.0064 & 0.0054 & 0.0051\tabularnewline
\hline 
\qquad{}Branch estimate $\left(\hat{\mu}_{4}\right)$ &  &  &  &  &  & \tabularnewline
\hline 
\qquad{}\qquad{}Mean & - & - & - & - & - & 0.0036\tabularnewline
\hline 
\qquad{}\qquad{}Std dev & - & - & - & - & - & \{0.0025\}\tabularnewline
\hline 
\end{tabular}
\par\end{centering}
\caption{Centered moments of firm effects (size-weighted)\protect\label{tab:Centered=000020Moments}}

{\footnotesize\bigskip{}
Notes: the rows labeled ``Mean'' report an average over 100 randomly
drawn packings. The number in curly braces is the standard deviation
of the branch-based estimates across these packings. Dividing this
number by 10 gives the standard error associated with repacking variability.
Number in parentheses reports a standard error capturing sampling
variability of the average estimated second moment (see \ref{sec:The-variance-of}
for details).}{\footnotesize\par}
\end{table}

The plug-in estimates of the second moment of firm effects are upward
biased due to sampling error \citep{andrews2008high}. A useful benchmark
comes from \citet{kline2020leave}, who report a plug-in estimate
of 0.0358 and a bias-corrected estimate of 0.0240 in a closely related
sample, implying an upward bias of nearly 50\%.\footnote{The \citet{kline2020leave} estimates pertain to a sample under which
the mobility graph remains connected whenever any individual \textit{mover}
is removed. This sample can be thought of as lying between the $1$-ECC
and the $2$-ECC as each undirected edge may involve one or more movers.
One should expect the bias to be larger in less connected samples
\citep{jochmans2019fixed}.} Table \ref{tab:Centered=000020Moments} finds somewhat smaller biases
in the $k$-ECCs for which branch-based estimates can be computed.
For example, in the $3$-ECC, we find $\hat{\mu}_{2,PI}-\hat{\mu}_{2}=0.0063$,
implying a 22\% upward bias in the plug-in estimate. 

On average, the biases tend to grow smaller as the network grows more
connected. Note that for $k=5$ we find $\hat{\mu}_{2}\approx\hat{\mu}_{2,PI}$,
which may reflect noise in both the plug-in and covariance based estimates.
In general, the unbiased branch-based estimators need not respect
logical constraints on the parameter space. For instance, $\hat{\mu}_{2}$
can take on negative values. To rule such behavior out, one can consider
biased estimators of the form $\min\left\{ \max\left\{ 0,\hat{\mu}_{2}\right\} ,\hat{\mu}_{2,PI}\right\} $.

As described in \ref{sec:The-variance-of}, when four or more branches
are present, it becomes possible to estimate the variance of the second
moment estimator $\hat{\mu}_{2}$ by exploiting the covariance between
disjoint pairs of differences in the branch-specific estimates. Applying
this variance estimator to the $6$-ECC, and averaging across the
100 packings, yields an estimated $\mathbb{V}\left[\hat{\mu}_{2}\right]$
of $(0.0022)^{2}$. Thus, the sampling uncertainty in $\hat{\mu}_{2}$
turns out to be of the same order of magnitude as its packing uncertainty.
After averaging across 100 packings, however, the packing uncertainty
becomes negligible relative to sampling uncertainty. 

Fortunately, the ratio of the average estimate of $\hat{\mu}_{2}$
to its standard error yields a $z$-score of $0.0273/0.0022\approx12.4$,
suggesting that $\mu_{2}$ is estimated precisely. A similar finding
was reported by \citet[Table IV]{kline2020leave}, who obtained a
standard error of 0.0006 on a closely related unbiased estimator of
$\mu_{2}$, yielding a $z$-score of roughly $40$. The precision
with which $\hat{\mu}_{2}$ is estimated suggests that it can safely
be used to scale the estimates of higher-order moments without inducing
weak identification problems.

Table \ref{tab:Scaled=000020Moments} converts the average moment
estimates from Table \ref{tab:Centered=000020Moments} into scaled
moments, which facilitates comparison to a Gaussian benchmark. The
scaled higher moment estimates suggest that the size-weighted distribution
of firm effects exhibits a left skew: more workers are employed at
firms with very low wages than at firms with very high wages. This
negative skew is less pronounced in the $6$-ECC, perhaps because
firms in the $6$-ECC tend to be very large. 

\begin{table}[H]
\begin{centering}
\begin{tabular}{|l|c|c|c|c|c|c|}
\hline 
\qquad{}\qquad{}$k$ & 1 & 2 & 3 & 4 & 5 & 6\tabularnewline
\hline 
\hline 
Std dev ($\sigma=\sqrt{\mu_{2}}$) &  &  &  &  &  & \tabularnewline
\hline 
\qquad{}Plug-in & 0.2215 & 0.1951 & 0.1841 & 0.1794 & 0.1751 & 0.1756\tabularnewline
\hline 
\qquad{}Branch means & - & - & 0.1660 & 0.1562 & 0.1749 & 0.1652\tabularnewline
\hline 
Skew $\left(\mu_{3}/\sigma^{3}\right)$ &  &  &  &  &  & \tabularnewline
\hline 
\qquad{}Plug-in & -0.8332 & -0.7403 & -0.8223 & -0.8507 & -0.7198 & -0.6998\tabularnewline
\hline 
\qquad{}Branch means & - & - & - & -2.2307 & -2.1482 & -0.3567\tabularnewline
\hline 
Kurtosis $\left(\mu_{4}/\sigma^{4}\right)$ &  &  &  &  &  & \tabularnewline
\hline 
\qquad{}Plug-in & 7.3087 & 5.8237 & 6.1367 & 6.1908 & 5.7763 & 5.4156\tabularnewline
\hline 
\qquad{}Branch means & - & - & - & - & - & 4.8572\tabularnewline
\hline 
\end{tabular}
\par\end{centering}
\caption{Scaled moments of firm effects (size-weighted)\protect\label{tab:Scaled=000020Moments}}

{\footnotesize\bigskip{}
Notes: the rows labeled ``Branch means'' report transformations
of the entries in Table \ref{tab:Centered=000020Moments} corresponding
to averages across packings of the relevant unbiased estimators of
centered moments.}{\footnotesize\par}
\end{table}

The four branches of the $6$-ECC allow us to estimate the size-weighted
fourth moment. While a normal distribution would exhibit a kurtosis
of 3, our preferred branch-based estimate of kurtosis is 4.9, indicating
heavy tails. Evidently, there are more very low and very high paying
firms in the $6$-ECC than one would surmise based upon a Gaussian
benchmark.

\subsection{Shrinkage\protect\label{subsec:Shrinkage}}

It is often of interest, either for forecasting or policy targeting
purposes, to construct a best predictor of the firm effects $\psi$
based upon the full-sample estimates $\hat{\psi}$. Standard empirical
Bayes shrinkage arguments are predicated on the assumption that the
distribution of noise is known ex-ante \citep{waltersEB2024}. Given
that the number of movers per edge is only 1.3, we cannot appeal to
a central limit theorem to defend a Gaussian approximation to the
noise in the branch-specific estimates $\left\{ \hat{\psi}_{b}\right\} _{b=1}^{M}$.
It turns out, however, that independence across branches facilitates
the construction of optimal predictors via standard regression methods
that remain valid regardless of the noise distribution.

To understand the basic logic of this argument, consider the case
where $M=2$. Let $\hat{\psi}_{bj}$ denote the $j$th entry of $\hat{\psi}_{b}$
and $\psi_{j}$ the $j$th entry of $\psi$. Now consider a random
effects model wherein $\psi_{j}\overset{i.i.d}{\sim}G$ and $\hat{\psi}_{b}\mid\psi\sim F_{b}$,
with $F_{b}$ obeying $\mathbb{E}_{F_{b}}\left[\hat{\psi}_{b}\right]=\psi$.
By iterated expectations,
\[
\mathbb{E}\left[\hat{\psi}_{2j}\mid\hat{\psi}_{1j}\right]=\mathbb{E}\left[\mathbb{E}\left[\hat{\psi}_{2j}\mid\psi_{j},\hat{\psi}_{1j}\right]\mid\hat{\psi}_{1j}\right]=\mathbb{E}\left[\mathbb{E}\left[\hat{\psi}_{2j}\mid\psi_{j}\right]\mid\hat{\psi}_{1j}\right]=\mathbb{E}\left[\psi_{j}\mid\hat{\psi}_{1j}\right],
\]
where the second equality follows by independence of the branch measurement
errors. Thus, a regression of the estimates in one branch on those
from another yields a minimum mean squared error optimal predictor
of the latent firm effect regardless of the noise distributions $\left\{ F_{1},F_{2}\right\} $,
an insight that can be traced back at least to \citet{krutchkoff1967supplementary}.
Importantly, this result holds under arbitrary patterns of heteroscedasticity,
both across and within branches.

\begin{figure}[H]
\noindent\begin{minipage}[t]{1\columnwidth}%
\subfloat[{$\mathbb{E}\left[\hat{\psi}_{2j}\mid\hat{\psi}_{1j}\right]$}]{\includegraphics[scale=0.4]{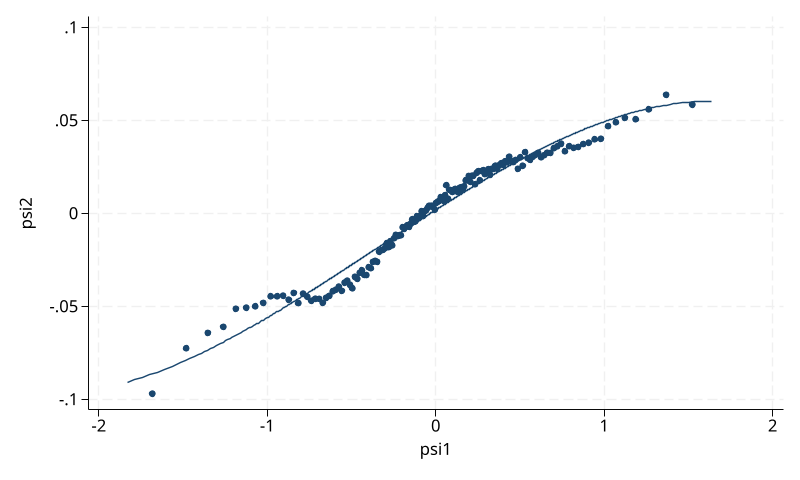}}\hfill{}\subfloat[{$\mathbb{E}\left[\hat{\psi}_{1j}\mid\hat{\psi}_{2j}\right]$}]{\includegraphics[scale=0.4]{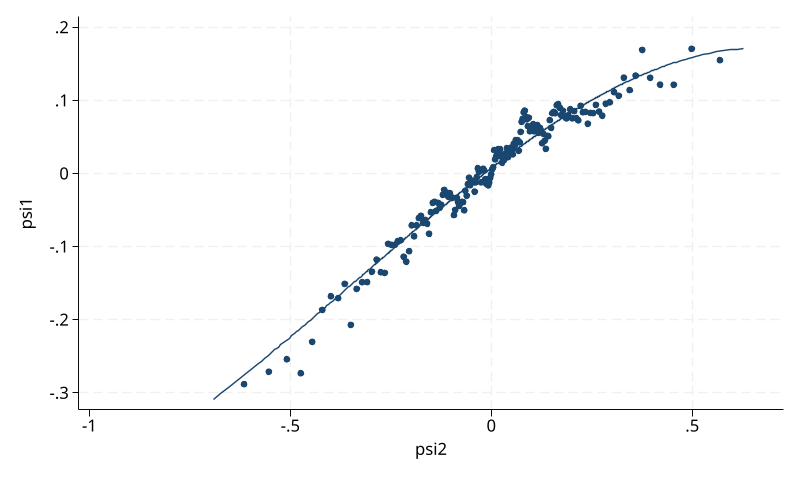}}%
\end{minipage}

\caption{Binscatters of polynomial fit to opposite branch\protect\label{fig:Binscatters}}
\end{figure}

Figure \ref{fig:Binscatters} illustrates this idea using the two
branches in the 3-ECC. Each panel depicts estimates of $\mathbb{E}\left[\hat{\psi}_{bj}\mid\hat{\psi}_{\ell j}=x\right]\equiv m_{b}\left(x\right)$
via the binscatter methods described in \citet{cattaneo2024binscatter},
where I have implicitly averaged over the 100 packings by applying
this method to a ``stacked'' dataset that concatenates $\left(\hat{\psi}_{1},\hat{\psi}_{2}\right)$
entries across packings. Note that the scale of the x-axis differs
across the two panels, reflecting that the entries of $\hat{\psi}_{2}$
are generally less variable than the corresponding entries of $\hat{\psi}_{1}$.
This is an artifact of the second branch utilizing leftover edges
that add additional information about the firm effects. 

In both panels of the figure, a third-order global polynomial approximation
$\hat{m}_{b}\left(x\right)$ to $m_{b}\left(x\right)$ is super-imposed
on the bins. This polynomial approximation fits well in the second
panel, where the predictor is the more precise $\hat{\psi}_{2j}$.
The fit is worse in the top panel, however, indicating that the conditional
expectation function is more complex when predicting with $\hat{\psi}_{1j}$. 

It is clear from the figure that the slope $\frac{d}{dx}\hat{m}_{b}\left(x\right)$
is less than one for both branches $b\in\left\{ 1,2\right\} $, which
is an example of the usual shrinkage phenomenon in optimal prediction
with noisy measurements. Rather than rely on one predictor or the
other, it makes sense to average them. For any $j$, an improved predictor
of $\psi_{j}$ is the average $\bar{m}\left(\hat{\psi}_{1j},\hat{\psi}_{2j}\right)=\frac{1}{2}\hat{m}_{1}\left(\hat{\psi}_{2j}\right)+\frac{1}{2}\hat{m}_{2}\left(\hat{\psi}_{1j}\right)$.

This idea generalizes to the case with more than two branches by regressing
the elements of each branch estimate on the corresponding entries
of all remaining branches. For example, with $M=3$, one would estimate
the functions $\mathbb{E}\left[\hat{\psi}_{b_{1}j}\mid\hat{\psi}_{b_{2}j}=x_{2},\hat{\psi}_{b_{3}j}=x_{3}\right]\equiv m_{b_{1}}\left(x_{2},x_{3}\right)$
across all three distinct triples of branches $\left(b_{1},b_{2},b_{3}\right)\in\left\{ 1,2,3\right\} :b_{1}\neq b_{2}\neq b_{3}$.
The resulting average prediction across branches can be written
\[
\bar{m}\left(\hat{\psi}_{1j},\hat{\psi}_{2j},\hat{\psi}_{3j}\right)=\frac{1}{3}\hat{m}_{1}\left(\hat{\psi}_{2j},\hat{\psi}_{3j}\right)+\frac{1}{3}\hat{m}_{2}\left(\hat{\psi}_{1j},\hat{\psi}_{3j}\right)+\frac{1}{3}\hat{m}_{3}\left(\hat{\psi}_{1j},\hat{\psi}_{2j}\right).
\]
Note that if the noise in the branches were identically distributed
it would necessarily be the case that $m_{1}\left(\cdot,\cdot\right)=m_{2}\left(\cdot,\cdot\right)=m_{3}\left(\cdot,\cdot\right)$.
In such a case, one might be tempted to collapse the left out branch
estimates down to their mean, and estimate the pooled leave-out conditional
expectation function $\mathbb{E}\left[\hat{\psi}_{b_{1}j}\mid\hat{\psi}_{b_{2}j}+\hat{\psi}_{b_{3}j}=x\right]$.

A recent paper by \citet{ignatiadis2023empirical} establishes that
it is possible to improve on such collapsed estimators when the noise
in each branch is identically distributed and non-Gaussian. Their
proposal involves regressing the estimates in each branch $b$ on
\textit{sorted values} of the estimates in all other branches. In
the case of $M=3$ branches, one would first regress $\hat{\psi}_{1j}$
on $\min\left\{ \hat{\psi}_{2j},\hat{\psi}_{3j}\right\} $ and $\max\left\{ \hat{\psi}_{2j},\hat{\psi}_{3j}\right\} $
to estimate the function 
\[
\mathbb{E}\left[\psi_{j}\mid\min\left\{ \hat{\psi}_{2j},\hat{\psi}_{3j}\right\} =\underline{x},\max\left\{ \hat{\psi}_{2j},\hat{\psi}_{3j}\right\} =\bar{x}\right]\equiv h_{1}\left(\underline{x},\bar{x}\right).
\]
Next, $\hat{\psi}_{2j}$ is regressed on $\min\left\{ \hat{\psi}_{1j},\hat{\psi}_{3j}\right\} $
and $\max\left\{ \hat{\psi}_{1j},\hat{\psi}_{3j}\right\} $ to estimate
$h_{2}\left(\underline{x},\bar{x}\right)$. Finally, $\hat{\psi}_{3j}$
is regressed on $\min\left\{ \hat{\psi}_{1j},\hat{\psi}_{2j}\right\} $
and $\max\left\{ \hat{\psi}_{1j},\hat{\psi}_{2j}\right\} $ to estimate
$h_{3}\left(\underline{x},\bar{x}\right)$. Averaging these three
cross-branch fits $\left(\hat{h}_{1},\hat{h}_{2},\hat{h}_{3}\right)$
yields the AURORA estimator $\bar{h}$, which \citet{ignatiadis2023empirical}
show performs nearly as well as an oracle that knows the noise distribution.
The rationale for using sorted values as regressors stems from the
observation that order statistics are sufficient for iid samples.
In the present setting, the noise distribution likely differs across
branches, which may undermine the advantages of relying on order statistics. 

\begin{table}[H]
\begin{centering}
\begin{tabular}{|c|c|c|c|}
\hline 
 & Naive $\left(\bar{\psi}\right)$ & Ignore order $\left(\bar{m}\right)$ & AURORA $\left(\bar{h}\right)$\tabularnewline
\hline 
\hline 
MSE & 0.173 & 0.044 & 0.044\tabularnewline
\hline 
\end{tabular}
\par\end{centering}
\caption{Predicting $\hat{\psi}_{4}$ using $\left(\hat{\psi}_{1},\hat{\psi}_{2},\hat{\psi}_{3}\right)$\protect\label{tab:Predicting--using}}

\bigskip{}

{\footnotesize Notes: all procedures applied to a ``stacked'' dataset
concatenating $\left(\hat{\psi}_{1},\hat{\psi}_{2},\hat{\psi}_{3},\hat{\psi}_{4}\right)$
entries across 100 alternate packings.}{\footnotesize\par}
\end{table}

Table \ref{tab:Predicting--using} summarizes the results of a forecasting
exercise in which the first three branch estimates $\left(\hat{\psi}_{1},\hat{\psi}_{2},\hat{\psi}_{3}\right)$
of the 6-ECC are used to form best predictors of $\psi$. I compute
both the AURORA estimator and the simpler regression based estimator
$\bar{m}\left(\hat{\psi}_{1j},\hat{\psi}_{2j},\hat{\psi}_{3j}\right)$
that ignores the order of the estimates. The conditional expectation
functions $\left\{ \hat{m}_{b}\right\} _{b=1}^{3}$ and $\left\{ \hat{h}_{b}\right\} _{b=1}^{3}$
underlying these approaches are computed via non-parametric series
regressions on B-spline bases tuned by cross-validation using Stata's
\texttt{npregress} function. To aid reproducibility, these steps are
again conducted in a stacked dataset that concatenates the data across
100 alternative packings. For comparison, I also report the naive
predictor $\bar{\psi}=\left(\hat{\psi}_{1}+\hat{\psi}_{2}+\hat{\psi}_{3}\right)/3$,
which serves as an unshrunk benchmark. To assess the quality of these
forecasts, the predictions are compared to $\hat{\psi}_{4}$, which
should be unbiased for $\psi$. For each predictor $\dot{\psi}\in\left\{ \bar{\psi},\bar{m},\bar{h}\right\} $,
the mean squared error of the prediction is computed as 
\[
MSE\left(\hat{\psi}_{4},\dot{\psi}\right)=\left(\hat{\psi}_{4}-\dot{\psi}\right)'\left(\hat{\psi}_{4}-\dot{\psi}\right)/J.
\]

As expected, the naive predictor incurs a large MSE because of the
noise in the branches. Relative to this benchmark, both the AURORA
estimator and its alternative that ignores order yield dramatic improvements
due to shrinkage. However, the order statistic approach does not appear
to convey any advantage over the simpler approach based on levels.
Whether the disappointing performance of AURORA is primarily attributable
to heteroscedasticity across branches or other factors (e.g., too
few branches or nearly Gaussian noise) is an interesting question
for future research.

\section{Conclusion}

As these examples illustrate, branches can dramatically simplify the
work of parsing signal from noise in fixed effects estimates. By quantifying
uncertainty, branches also enable advanced downstream tasks such as
moment estimation and shrinkage that form key aspects of the empirical
Bayes toolkit \citep{waltersEB2024}. 

More generally, breaking over-identified estimates down into simpler
estimates based upon branches is appealing on transparency grounds.
In this analysis, the firm effect estimates based on branches corresponding
to trees of the mobility graph are invertible linear transformations
of the oriented average wage changes $\overrightarrow{\Delta}_{b}$.
Therefore, all of the information in the microdata comprising a tree
is conveyed by the branch-specific estimate $\hat{\psi}_{b}$. In
addition to demystifying the origin of the full-sample estimates being
reported, branches may be useful for assessing the degree to which
over-identifying restrictions are violated in practice, a topic that
I leave for future work.

As our discussion of the Nash-Williams-Tutte theorem revealed, the
number of branches that can be extracted from a dataset depends crucially
on the edge connectivity of the mobility network. The findings reported
here suggest that a useful approximation to a $k$-ECC can be had
from a networks' $k$-core. In many settings (e.g., studies of migration,
bilateral trade, or friendship networks) the average degree of graph
vertices is very high, which should enable the extraction of dozens
of branches without substantial pruning of the network. The development
of algorithms capable of rapidly extracting all spanning trees from
enormous datasets with high degree is an interesting area for future
research.

\printbibliography

\pagebreak{}

\appendix
\setcounter{figure}{0}
\renewcommand{\thefigure}{A.\arabic{figure}}
\renewcommand{\thesection}{Appendix \Alph{section}\!}

\section{A random sampling interpretation of the AKM model\protect\label{sec:Independence-and-uncertainty}}

Consider the idealized case where there exists an infinite super-population
of movers between each of the origin-destination pairs in $\mathcal{P}$.
If we randomly sample movers from these populations then, for every
$\left(o,d\right)\in\mathcal{P}$,
\begin{equation}
y_{i2}-y_{i1}\mid\mathbf{j}\left(i,1\right)=o,\mathbf{j}\left(i,2\right)=d\overset{iid}{\sim}F_{od},\label{eq:iid}
\end{equation}
where $F_{od}:\mathbb{R}\rightarrow\left[0,1\right]$ is the distribution
of wage changes in the super-population of movers from firm $o$ to
firm $d$. The AKM model is a set of moment restrictions on the $\left\{ F_{od}\right\} _{\left(o,d\right)\in\mathcal{P}}$
stipulating that
\[
\int u\:dF_{od}\left(u\right)=\psi_{d}-\psi_{o},
\]
for all $\left(o,d\right)\in\mathcal{P}$.

In this thought experiment, the uncertainty that concerns us is how
estimates of the firm effects would change if a different set of movers
were sampled. The noise contributions can now be defined as departures
from the behavior of the average mover
\[
u_{i}:=y_{i2}-y_{i1}-\left(\psi_{\mathbf{j}\left(i,2\right)}-\psi_{\mathbf{j}\left(i,1\right)}\right).
\]
It follows that
\[
u_{i}\mid\mathbf{j}\left(i,1\right)=o,\mathbf{j}\left(i,2\right)=d\overset{iid}{\sim}H_{od},
\]
where $H_{od}\left(u\right)=F_{od}\left(u+\psi_{d}-\psi_{o}\right).$
Thus, when the AKM model holds, random sampling of movers necessarily
yields mutually independent (and identically distributed) noise contributions
$\left\{ u_{i}\right\} _{i=1}^{N}$, implying that Assumption \ref{Pair-independence}
is satisfied. 

In practice, AKM models are typically fit to administrative records
that involve no random sampling. However, one can view administrative
records as capturing a set of movers drawn randomly from a finite
population of potential movers. Suppose, for example, that there are
$n_{o}\in\mathbb{N}_{>0}$ workers at origin firm $o$ who share a
common probability $p_{od}>0$ of moving to firm $d$. Then the set
of movers observed to move over any two periods is a random draw from
the population $n_{o}$ of potential movers. When $n_{0}$ is large,
dependence between the random draws becomes asymptotically negligible
\citep{hajek1960limiting,serfling1980approximation}, implying \eqref{eq:iid}
will hold. Consequently, Assumption \ref{Pair-independence} is satisfied.

\pagebreak{}

\section{An example where greedy extraction fails\protect\label{sec:Appendix:-An-example}}

This appendix gives an example of how the order in which spanning
trees are extracted from a graph can influence the total number of
trees packed. The complete graph depicted in Figure \ref{fig:Two-attempts}
has 6 vertices and 15 edges. Since every vertex has degree 5, this
network is a $5$-ECC. By the Nash-Williams-Tutte theorem, the graph
must contain at least $\left\lfloor 5/2\right\rfloor =2$ trees. However,
the graph's actual tree packing number, $\tau\left(G\right)$, is
$3$.

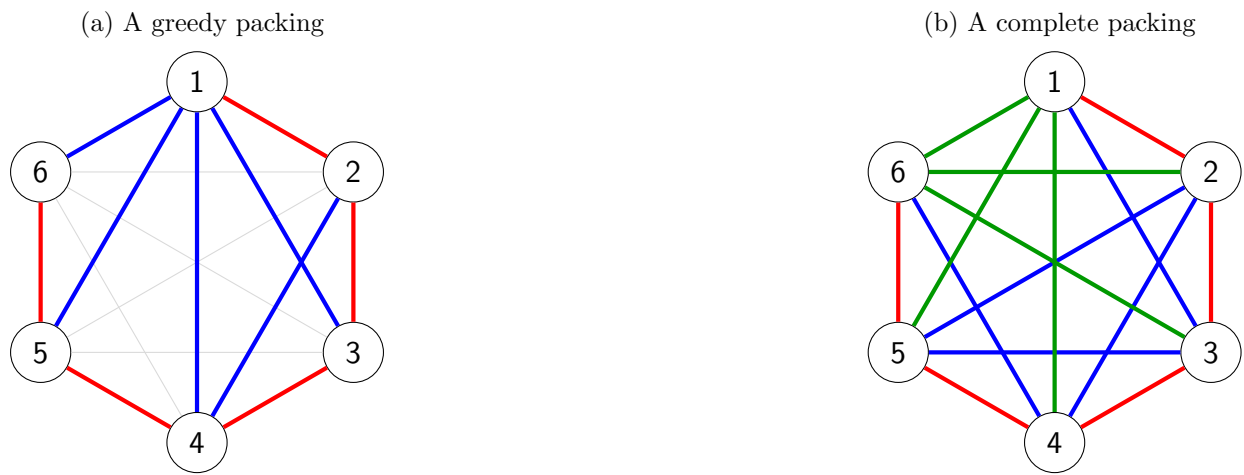
\begin{figure}[H]
\noindent\begin{minipage}[t]{1\columnwidth}%
\subfloat[A greedy packing]{\begin{tikzpicture}[scale=1.2,every node/.style={circle,draw,minimum size=8mm,font=\sffamily}]

\begin{scope}[shift={(0,0)}]
  \node (1a) at (90:2){1};
  \node (2a) at (30:2){2};
  \node (3a) at (-30:2){3};
  \node (4a) at (-90:2){4};
  \node (5a) at (-150:2){5};
  \node (6a) at (150:2){6};
  \foreach \i/\j in {1a/2a,1a/3a,1a/4a,1a/5a,1a/6a,2a/3a,2a/4a,2a/5a,2a/6a,3a/4a,3a/5a,3a/6a,4a/5a,4a/6a,5a/6a}
    \draw[gray!30] (\i) -- (\j);
  \foreach \i/\j in {1a/2a,2a/3a,3a/4a,4a/5a,5a/6a}
    \draw[red,ultra thick] (\i) -- (\j);
  \foreach \i/\j in {1a/3a,1a/4a,1a/5a,1a/6a,2a/4a}
    \draw[blue,ultra thick] (\i) -- (\j);
\end{scope}

\end{tikzpicture}

}\hfill{}\subfloat[A complete packing]{\begin{tikzpicture}[scale=1.2,every node/.style={circle,draw,minimum size=8mm,font=\sffamily}]

\begin{scope}[shift={(5,0)}]
  \node (1b) at (90:2){1};
  \node (2b) at (30:2){2};
  \node (3b) at (-30:2){3};
  \node (4b) at (-90:2){4};
  \node (5b) at (-150:2){5};
  \node (6b) at (150:2){6};
  \foreach \i/\j in {1b/2b,1b/3b,1b/4b,1b/5b,1b/6b,2b/3b,2b/4b,2b/5b,2b/6b,3b/4b,3b/5b,3b/6b,4b/5b,4b/6b,5b/6b}
    \draw[gray!30] (\i) -- (\j);
  \foreach \i/\j in {1b/2b,2b/3b,3b/4b,4b/5b,5b/6b}
    \draw[red,ultra thick] (\i) -- (\j);
  \foreach \i/\j in {1b/3b,3b/5b,5b/2b,2b/4b,4b/6b}
    \draw[blue,ultra thick] (\i) -- (\j);
  \foreach \i/\j in {1b/4b,1b/5b,1b/6b,2b/6b,3b/6b}
    \draw[green!60!black,ultra thick] (\i) -- (\j);
\end{scope}

\end{tikzpicture}

}%
\end{minipage}

\caption{Two attempts to pack the same graph\protect\label{fig:Two-attempts}}
\end{figure}

The perils of greedy sequential extraction are depicted in panel (a),
which shows a possible choice of two initial spanning trees that have
been colored blue and red. The remaining edges fail to connect vertex
1, implying no remaining trees can be extracted after these two have
been removed. Panel (b) demonstrates that, in fact, three edge-disjoint
spanning trees that can be packed into the graph. While the three
shaded trees depicted are not unique, applying the Roskind-Tarjan
algorithm to this $5$-ECC with $M=3$ would ensure that three spanning
trees are found.\newpage{}

\section{Estimating the variance of $\hat{\mu}_{2}$\protect\label{sec:The-variance-of}}

The unbiased second central moment estimator described in Section
\ref{subsec:Moment-estimation} can be written
\begin{eqnarray*}
\hat{\mu}_{2} & = & \binom{M}{2}^{-1}\omega'\left(\sum_{b=1}^{M}\sum_{\ell<b}\hat{\mu}_{2,b,\ell}\right),
\end{eqnarray*}
where
\[
\hat{\mu}_{2,b,\ell}=\left(\hat{\psi}_{b}-\omega'\hat{\psi}_{b}\boldsymbol{1}\right)\odot\left(\hat{\psi}_{\ell}-\omega'\hat{\psi}_{\ell}\boldsymbol{1}\right).
\]

Define $\hat{g}_{b,\ell}=\omega'\hat{\mu}_{2,b,\ell}\in\mathbb{R}$,
so that
\[
\hat{\mu}_{2}=\binom{M}{2}^{-1}\left(\sum_{b=1}^{M}\sum_{\ell<b}\hat{g}_{b,\ell}\right).
\]
From independence across branches we have
\begin{eqnarray*}
\mathbb{V}\left[\hat{\mu_{2}}\right] & = & \binom{M}{2}^{-2}\mathbb{V}\left[\sum_{b=1}^{M}\sum_{\ell<b}\hat{g}_{b,\ell}\right]\\
 & = & \binom{M}{2}^{-2}\sum_{b=1}^{M}\sum_{\ell<b}\mathbb{V}\left[\hat{g}_{b,\ell}\right]\\
 & + & 2\binom{M}{2}^{-2}\sum_{b_{1}=1}^{M}\sum_{b_{2}<b_{1}}\sum_{b_{3}<b_{2}}\left(\mathbb{C}\left[\hat{g}_{b_{1},b_{2}},\hat{g}_{b_{1},b_{3}}\right]+\mathbb{C}\left[\hat{g}_{b_{1},b_{2}},\hat{g}_{b_{2},b_{3}}\right]+\mathbb{C}\left[\hat{g}_{b_{1},b_{3}},\hat{g}_{b_{2},b_{3}}\right]\right),
\end{eqnarray*}
where $\mathbb{C}$ denotes the covariance operator. This variance
expression separates into two parts: the variance of any single $\hat{g}_{b,\ell}$,
and the covariance of terms that share a single branch. 

The first sort of term can be estimated by examining the sample variance
of differences between disjoint pairs $\left(b_{1}>b_{2}\right)$
and $\left(b_{3}>b_{4}\right)$. Note that there are $6\binom{M}{4}$
ordered disjoint pairs, each of which obeys: 
\[
\mathbb{E}\left[\left(\hat{g}_{b_{1},b_{2}}-\hat{g}_{b_{3},b_{4}}\right)^{2}\right]=\mathbb{V}\left[\hat{g}_{b_{1},b_{2}}\right]+\mathbb{V}\left[\hat{g}_{b_{3},b_{4}}\right].
\]
In contrast, the covariance terms can be estimated by examining the
sample covariance across partially overlapping pairs. Specifically,
there are $\binom{M}{3}$ terms obeying 
\[
\mathbb{E}\left[\left(\hat{g}_{b_{1},b_{2}}-\hat{g}_{b_{1},b_{3}}\right)^{2}\right]=\mathbb{V}\left[\hat{g}_{b_{1},b_{2}}\right]+\mathbb{V}\left[\hat{g}_{b_{1},b_{3}}\right]-2\mathbb{C}\left[\hat{g}_{b_{1},b_{2}},\hat{g}_{b_{1},b_{3}}\right],
\]
where $b_{1}>b_{2}>b_{3}$.

Now define the sample variance across ordered disjoint pairs:
\[
\bar{V}=\frac{1}{6\binom{M}{4}}\underset{\left(b_{1},b_{2}\right)\cap\left(b_{3},b_{4}\right)=\emptyset}{\sum_{b_{1}>b_{2},\,b_{3}>b_{4}}}\left(\hat{g}_{b_{1},b_{2}}-\hat{g}_{b_{3},b_{4}}\right)^{2}.
\]
As the $\binom{M}{4}$ term reveals, constructing $\bar{V}$ requires
$M\geq4$. The corresponding sum of squared differences across partially
overlapping pairs is:
\begin{eqnarray*}
\bar{C} & = & \binom{M}{3}^{-1}\sum_{b_{1}>b_{2}>b_{3}}\left\{ \left(\hat{g}_{b_{1},b_{2}}-\hat{g}_{b_{1},b_{3}}\right)^{2}+\left(\hat{g}_{b_{1},b_{2}}-\hat{g}_{b_{2},b_{3}}\right)^{2}+\left(\hat{g}_{b_{1},b_{3}}-\hat{g}_{b_{2},b_{3}}\right)^{2}\right\} .
\end{eqnarray*}

Algebraic manipulations reveal
\begin{eqnarray*}
\mathbb{E}\left[\bar{V}\right] & = & \frac{2\binom{M-2}{2}}{6\binom{M}{4}}\sum_{b=1}^{M}\sum_{\ell<b}\mathbb{V}\left[\hat{g}_{b,\ell}\right]\\
 & = & \frac{4}{M\left(M-1\right)}\sum_{b=1}^{M}\sum_{\ell<b}\mathbb{V}\left[\hat{g}_{b,\ell}\right].
\end{eqnarray*}
\begin{eqnarray*}
\mathbb{E}\left[\bar{C}\right] & = & \frac{2\left(M-2\right)}{\binom{M}{3}}\sum_{b=1}^{M}\sum_{\ell<b}\mathbb{V}\left[\hat{g}_{b,\ell}\right]\\
 & - & \frac{2}{\binom{M}{3}}\sum_{b_{1}=1}^{M}\sum_{b_{2}<b_{1}}\sum_{b_{3}<b_{2}}\left(\mathbb{C}\left[\hat{g}_{b_{1},b_{2}},\hat{g}_{b_{1},b_{3}}\right]+\mathbb{C}\left[\hat{g}_{b_{1},b_{2}},\hat{g}_{b_{2},b_{3}}\right]+\mathbb{C}\left[\hat{g}_{b_{1},b_{3}},\hat{g}_{b_{2},b_{3}}\right]\right).
\end{eqnarray*}

It follows that an unbiased estimator for the variance of $\hat{\mu}_{2}$
is:
\[
\hat{\mathbb{V}}\left[\hat{\mu_{2}}\right]=\binom{M}{2}^{-2}\left[\binom{M}{2}\frac{\bar{V}}{2}+\binom{M}{3}\left(3\bar{V}-\bar{C}\right)\right].
\]
While $\hat{\mathbb{V}}\left[\hat{\mu_{2}}\right]$ is unbiased for
$\mathbb{V}\left[\hat{\mu_{2}}\right]$, it is not guaranteed to yield
a non-negative estimate. One can remedy this by employing the constrained
estimator $\max\left\{ 0,\hat{\mathbb{V}}\left[\hat{\mu_{2}}\right]\right\} $,
which possesses a small upward bias.
\end{document}